\newtheorem{theorem}{Theorem}
\newcommand{\lamp}[0]{\eta_{\xi}}
\newcommand{\vola}[0]{\sigma_{\xi}}
\newcommand{\driftOrig}[0]{b_{\theta,\xi}}
\newcommand{\drift}[0]{\alpha_{\theta}}
\newcommand{\A}[0]{A_{\theta}}
\newcommand{\G}[0]{G_{\theta}}
\newcommand{\lowBd}[0]{l_{\theta}}
\newcommand{\upBd}[0]{r_{\theta}}
\newcommand{\obs}[0]{x^{\xi}}
\newcommand{\m}[0]{\mu^{\xi}}
\newcommand{\jointDist}[2]{\pi(\theta^{#1},\xi^{#1},\mathcal{S}(\mathcal{A})^{#2},\mathcal{D})}
\newcommand{\ud}{\,\mathrm{d}}
\providecommand{\keywords}[1]
{
  \small	
  \textbf{\textit{Keywords---}} #1
}
\title{Flexible Bayesian inference for diffusion processes using splines}
\author[1,2,3]{Paul A. Jenkins}
\author[3,4]{Murray Pollock}
\author[1,3]{Gareth O. Roberts}
\affil[1]{{\small Department of Statistics, University of Warwick, Coventry, CV4 7AL, United Kingdom.}}
\affil[2]{{\small Department of Computer Science, University of Warwick, Coventry, CV4 7AL, United Kingdom.}}
\affil[3]{{\small The Alan Turing Institute, British Library, London, NW1 2DB, United Kingdom.}}
\affil[4]{{\small School of Mathematics, Statistics and Physics, Newcastle University, Newcastle-upon-Tyne, NE1 7RU, United Kingdom.}}
\begin{document}

\maketitle
%%%%%%%%%%%%%%%%%%%%%%%%%%%%%%%%%%%%%%%%%%%%%%%%%%%%%%%%%%%%%%%%%%%%%%%%%%%%%%
\begin{abstract}
We introduce a flexible method to simultaneously infer both the drift and volatility functions of a discretely observed scalar diffusion. We introduce spline bases to represent these functions and develop a Markov chain Monte Carlo algorithm to infer, a posteriori, the coefficients of these functions in the spline basis. A key innovation is that we use spline bases to model transformed versions of the drift and volatility functions rather than the functions themselves. The output of the algorithm is a posterior sample of plausible drift and volatility functions that are not constrained to any particular parametric family. The flexibility of this approach provides practitioners a powerful investigative tool, allowing them to posit a variety of parametric models to better capture the underlying dynamics of their processes of interest. We illustrate the versatility of our method by applying it to challenging datasets from finance, paleoclimatology, and astrophysics. In view of the parametric diffusion models widely employed in the literature for those examples, some of our results are surprising since they call into question some aspects of these models.
\end{abstract}

\keywords{Markov chain Monte Carlo; Stochastic differential equation; Path-space rejection sampling; Interest-rate modelling; Climate modelling; Quasar light curve modelling.}

\section{Introduction}

Diffusion processes have found wide application across the engineering, natural and social sciences \citep{kloeden1992numerical, van_zanten_nonparametric_2013}. For instance, they have been successfully used in neuroscience to model the membrane potential of neurons \citep{lansky2008review}, in molecular dynamics to model the angles between atoms evolving in a force field \citep{papaspiliopoulos2012nonparametric}, and in astrophysics to describe quasar variability over time \citep{kelly2009variations}. Within econometrics they have modelled asset prices and interest rates, and have been used to price financial instruments \citep{karatzas1998methods}. Other applications include paleoclimatology, where they are used to model glacial cycles in energy balance models and variability in the intensity of El Ni\~{n}o and the Southern Oscillation \citep{imkeller_conceptual_2002}. 

In general all of these applications aim to infer from discrete temporal observations $\mathcal{D} = \{v_{t_i}\}_{i=0}^N$ the coefficients of an underlying diffusion model, $V$, that is a Markov solution to the stochastic differential equation (SDE)
\begin{equation}\label{eq:master_SDE}
  \ud V_t = b(V_t) \ud t + \sigma(V_t) \ud W_t,\quad t\in[0,T],\quad V_0=v_0,
\end{equation}
where $b$ and $\sigma$ are the drift and volatility coefficients. 

If we restrict ourselves to parametric models for $b$ and $\sigma$ (i.e.\ the functional forms of the coefficients are known up to the value of a parameter vector $\theta$) then inference is still challenging. In particular, for a given parametrization the likelihood of the observations $\mathcal{D}$ will typically not be known in closed form, and as a consequence a considerable literature has developed to tackle this problem. Within a frequentist paradigm, approaches include estimating functions \citep{bibby2010estimating}, maximization of approximate likelihood functions \citep{dacunha1986estimation, ait2002maximum, ait2008closed}, and simulation-based schemes \citep{pedersen1995new,durham_numerical:2002,beskos2006exact,beskos2009monte}. In the Bayesian literature \cite{roberts2001inference} proposed a Markov chain Monte Carlo (MCMC) approach using data-augmentation, also utilized by \cite{bladt2014simple, golightly_bayesian:2008, sermaidis_markov:2013}, and \cite{ van2017bayesian}.

However, for many practical problems the functional form of $b$ and $\sigma$ are either unknown or disputed within the applied literature. Specification of a good finite-dimensional model is challenging, particularly outside the context of physical phenomena in the natural sciences. In finance for instance, finding realistic models is particularly difficult, as exemplified by disputes surrounding models for interest rates or stock prices \citep{bali_comprehensive_2006, durham_numerical:2002}. In such a situation a non-parametric method of inference may be more attractive as we need make no restrictive assumptions about functional forms for the drift and diffusion coefficients. From the perspective of a practitioner these methods are particularly appealing as the functional form that arises can be used to identify a plausible and interpretable parametric family, or to gain direct insight about the dynamics of the underlying process.

The frequentist, non-parametric literature is dominated by kernel-type estimators. Examples include those of \cite{banon_nonparametric_1978,stanton_nonparametric_1997}, locally linear smoothers with adaptive bandwidth \citep{spokoiny_adaptive_2000}, and estimators derived via penalized likelihood \citep{comte_penalized_2007}. Showing consistency and contraction rates of the estimators is, however, non-trivial \citep{dalalyan2002asymptotically, gobet2004nonparametric, tuan_nonparametric_1981, zanten_rates_2001}. 

The literature on Bayesian, non-parametric inference for diffusions is not as well developed as its frequentist counterpart \citep[see][for an overview]{van_zanten_nonparametric_2013}, and to our knowledge methods of inference only for the drift coefficient have been described to date in the setting of low-frequency observations. In \cite{papaspiliopoulos2012nonparametric} the unknown drift function is equipped with a prior measure in function space, which is assumed to be Gaussian with mean $0$ and covariance defined by a certain differential operator, and a data augmentation scheme \citep{roberts2001inference} is then used to compute an MCMC approximation to the posterior. Consistency results for this setting are shown by \cite{pokern2013posterior}, and improved contraction rates are derived by \cite{waaij_gaussian_2016}. \cite{van_der_meulen_reversible_2014} proposed an algorithmic modification to the procedure of \cite{papaspiliopoulos2012nonparametric}, using a different basis expansion for the drift function and employing random truncation of this expansion (with a truncation point equipped with a prior and explored with a reversible jump step). Contraction rates for this approach were derived in \cite{van_der_meulen_adaptive_2018}. Some additional references establishing consistency and contraction rates include \cite{gugushvili_nonparametric_2014, koskela2018consistency, nickl_nonparametric_2017,nic:ray:2020, meulen_consistent_2013}. Finally, \cite{gugushvili2023nonparametric} recently proposed a Bayesian method of inference of the diffusion coefficient for high frequency financial datasets.

In the Bayesian setting when dealing with real-world datasets of discretely observed diffusions, the simultaneous estimation of \emph{both} $b$ and $\sigma$ in \eqref{eq:master_SDE} is a challenging problem, and one which current Bayesian non-parametric approaches can not address. In this paper we propose a flexible Bayesian algorithm for simultaneous estimation of both drift and diffusion coefficients for discretely observed diffusions, without any restriction on the observation frequency, drawing on the strengths of both parametric and non-parametric paradigms. Our approach is parametrized in a way that is flexible in adapting to subtle patterns in the data, yet once a set of hyperparameters is fixed the model becomes parametric and consistency results and simplicity of implementation of the regular parametric approach apply. The method has the additional advantage that it avoids the need to work with a discretised version of \eqref{eq:master_SDE}. Substantial emphasis of our work is put into achieving an efficient, self-contained, and user-friendly algorithm for inference on the functional form of $b$ and $\sigma$ for SDEs. Visualization of the functional form of $b$ and $\sigma$ is a powerful investigative tool for practitioners, allowing them to better understand the dynamics of their process of interest, and give them insight as to what may be appropriate parametric models.

Our method has two crucial components. The first component is the introduction of a spline basis \citep{de1978practical} to model (indirectly) $b$ and $\sigma$. Splines are compactly supported piece-wise polynomial bases which offer us a great deal of flexibility in modelling functions, and critically are able to capture their local behaviour. The second component is that a scalar diffusion $V$ can be equivalently defined either via the pair of functions $(b(\cdot),\sigma(\cdot))$ or via $(A(\cdot),\eta(\cdot))$, where
\begin{equation}\label{eq:lamperti}
\begin{split}
\eta(v)&:=\int^v_0\frac{1}{\sigma(u)} \ud u,\\
A(x)&:=\int^x_0\alpha(u) \ud u,\quad \alpha(x) = \frac{b(\eta^{-1}(x))}{\sigma(\eta^{-1}(x))}-\frac{1}{2} \sigma'(\eta^{-1}(x)).
\end{split}
\end{equation}
$\eta$ is commonly known as the Lamperti transformation \citep{lamperti1964simple}. In this paper we model the drift and volatility functions by expressing $A$ and $\eta$ in spline bases. The Lamperti transformation is crucial in avoiding potential issues of degeneracy in the methodology we subsequently develop. Briefly, the issue is as follows. Conceptually we are augmenting the parameter space, to be explored by an MCMC algorithm, with the path space which takes the entire sample path of the diffusion as a latent variable. However, the volatility coefficient is completely determined by this sample path via its quadratic variation, at least in the range of the path. Thus a Gibbs-type algorithm that attempts to alternate between updates of the sample path and updates of $\sigma$ is `reducible': the sample path allows for only one possible $\sigma$ and $\sigma$ can no longer be updated; neither can any new sample paths compatible with other choices for $\sigma$ be proposed. One solution is to propose updates not to the latent sample path but to its underlying driving Brownian motion, which can be done without reference to any parameters determining $\sigma$; the Lamperti transformation arises naturally in constructing the mapping between Brownian motion and the original diffusion. See Appendix \ref{apx:supplement} and extensive discussion in \citet{roberts2001inference} for further details. Once one accepts the need to parametrise $V$ using $\eta$, then $A$ (or equivalently $\alpha$) is the only free function remaining.

An additional benefit of the parameterization in \eqref{eq:lamperti} is that it is possible to enforce monotonicity of $\eta$ in the spline basis (via so-called \emph{I-splines}), and thus guarantee $\sigma$ be positive. In view of the definition of $\eta$ as an integrated positive function this property is essential. Without this reparametrization, a direct spline representation of $\sigma$ could become negative in some regions of the state space. As discussed in our methodological sections the transformation \eqref{eq:lamperti}, together with the availability of easily computable bounds on functions built with splines, have the additional advantage that we can avoid any time-discretization of \eqref{eq:master_SDE}.

Within the framework of a spline basis representation of $A$ and $\eta$, we provide an MCMC algorithm for sampling from the posterior of the basis parameters, given $\mathcal{D}$. We proceed via data-augmentation \citep{roberts2001inference, sermaidis_markov:2013}, alternately updating the basis parameters and the latent sample path connecting observations in $\mathcal{D}$. Under conditions on $b$ and $\sigma$ which we show to hold in their implied spline representations, it is in fact possible to implement an algorithm using only a finite-dimensional surrogate for each sample path, circumventing the need to discretize the model.

We benchmark the performance of our algorithm and its accuracy in recovering the coefficients of a true, generating SDE on an illustrative example. We then apply the method to three real-world datasets: from finance, on the evolution of the short-term interest rate through  three-month treasury bills; from paleoclimatology, looking at the fluctuations of historical temperatures on the Northern Hemisphere; and from astrophysics, examining quasar light variability. These showcase a broad range of potential applications. The results obtained on the financial dataset are largely in agreement with the conclusions drawn from the use of competing, frequentist methods. However, for the two other examples we show that the parametric models commonly used in the literature may need revisiting.

This paper is organized as follows: in Section \ref{sec:main_section} we present our flexible family of diffusions, together with our choice of spline bases; in Section \ref{sec:inference_algo} we then develop an MCMC algorithm (Algorithm \ref{alg:mcmc}) targeting the coefficients of the spline bases; in Section \ref{sec:compconsiderations} we discuss a number of practical considerations in the implementation of Algorithm \ref{alg:mcmc}, including the particular choices of knots together with their location, and regularization with appropriate choices of prior; in Section \ref{sec:numerical} we consider our methodology applied to the broad range of real-world examples we discussed above; finally, in Section \ref{sec:conclusions} we outline natural continuations of our work from both a methodological and application perspective. Technical details where appropriate are collated in the appendix. 

\section{A flexible family of diffusions}\label{sec:main_section}

As noted in the introduction, we first apply the transformation \eqref{eq:lamperti} to represent $b:\mathbbm{R}\to\mathbbm{R}$ and $\sigma:\mathbbm{R}\to\mathbbm{R}_{\geq 0}$ as $A:\mathbbm{R}\to\mathbbm{R}$ and $\eta:\mathbbm{R}\to\mathbbm{R}_{\geq 0}$. We then represent $A$ and $\eta$ as follows:
\begin{equation}\label{eq:lamperti_splines}
\A(\cdot)=\sum_{i=0}^{M_\theta}\theta_i u_i(\cdot) = \theta^Tu(\cdot),\quad \lamp(\cdot)=\sum_{i=0}^{M_\xi}\xi_i (h_i(\cdot)-h_i(\bar{v})) = \xi^Th(\cdot)-\xi^Th(\bar{v}),
\end{equation}
where $u = (u_1,\dots,u_{M_{\theta}})$ and $h = (h_1,\dots,h_{M_{\xi}})$ are two, possibly distinct, sets of (twice-differentiable) basis functions, $\theta$ and $\xi$ are vectors of coefficients, and $\bar{v}\in\mathbbm{R}$ is a free parameter used for centering $\lamp$ (more details about $\bar{v}$ are given in Section \ref{sec:compbases}). We take a Bayesian approach: given a choice of bases $u$ and $h$ and a prior $\pi(\theta,\xi)$ on the parameters of interest, our interest is in the posterior $\pi(\theta,\xi | \mathcal{D})$. We will develop an MCMC algorithm targeting this distribution.

An additional aim is to avoid any form of time-discretization of the SDE in \eqref{eq:master_SDE}, circumventing the need to analyse any introduced bias. Using a data-augmentation scheme constitutes one way to achieve this; however, as we shall see, there are a series of practical problems that first need to be overcome. In particular it must be possible to compute a series of quantities in closed form: the Jacobian of the Lamperti transformation, $D_\xi=1/\sigma_\xi$; the integrand in the exponent of a Radon--Nikodym derivative of the law of diffusion $X$ with respect to Wiener law, $G_\theta:=\left(\alpha_\theta^2+\alpha_\theta'\right)/2$; and global upper and lower bounds on $G_\theta$. The first two are easy to derive with any choice of (sufficiently differentiable) basis functions and are given by
\begin{equation}\label{eq:generic_eq_semi_param}
    D_\theta(\cdot)=\xi^Th'(\cdot),\quad G_\theta(\cdot) = \frac{1}{2}\left(\left(\theta^T u'(\cdot)\right)^2 + \theta^Tu''(\cdot)\right),
\end{equation}
where $'$ applied to a vector denotes component-wise derivative. It is the need to compute global upper and lower bounds on $G_\theta$ that substantially narrows down the possible choices for the basis $u$. On one hand, so long as all of $u_i$ $(i=1,\dots, M_{\theta})$ have bounded first and second order derivatives, i.e.\ $\lVert u'\rVert_\infty:=\sup\{ \lvert u'_i(y)\rvert;y\in\mathbbm{R}, i\in\{1,\dots,M_{\theta}\}\}<\infty$ and $\lVert u''\rVert_\infty<\infty$, it is always possible to bound $G_\theta$ by:
\begin{equation}\label{eq:generic_bounds_on_G}
    -\frac{1}{2}\sum_{i=0}^{M_\theta}\lvert\theta_i\rvert\lVert u''_i\rVert_\infty\leq G_\theta \leq \frac{1}{2}\left(\sum_{i=0}^{M_{\theta}}\lvert\theta_i\rvert \lVert u'_i\rVert_\infty\right)^2+\frac{1}{2}\sum_{i=0}^{M_\theta}\lvert\theta_i\rvert\lVert u''_i\rVert_\infty.
\end{equation}
On the other hand, the bounds above are almost always too crude and render the algorithm impossible to implement in practice. The choice of basis $u$ is therefore dictated by the need for tight bounds on $G_\theta$.

In this paper we propose to use splines as bases $u$ and $h$. Splines address all of the issues discussed above. In particular, unlike general bases for which bounds on $\G$ need to be computed from \eqref{eq:generic_bounds_on_G} (and are thus limited by the computational issues that arise from the explosion of those bounds with the number of included basis terms), the maxima and minima of $\G$ can be efficiently identified in the spline basis. Ultimately this yields tight bounds regardless of the number of included basis functions, and offers more degrees of freedom than many alternatives. Additionally, it is possible to regularize splines, exerting direct control over the desired level of flexibility of $\A$ and $\lamp$ (see Section \ref{sec:compknots} for details). Finally, it is possible to directly equip any function expanded in a spline basis with a monotonicity property, substantially reducing the size of the function space for $\lamp$ a priori; this is crucial in light of \eqref{eq:lamperti} with $\eta$ an integral of a positive function.

Naturally, a piecewise polynomial basis can be defined in many ways. Perhaps the most common family of splines are the so-called \emph{B-splines} \citep{de1978practical}. B-splines are piecewise polynomial curves with compact support for which numerically efficient algorithms exist. For us they are a natural choice as we can simply choose the compact support to include the range of the data plus some margin to avoid edge effects. This allows us to model a flexible class of functions, and gives us analytical tractability (including derivatives) and a user-selectable degree of smoothness. 

B-splines are controlled by two sets of hyper-parameters: basis order (signifying the maximal order of any polynomial used) and the placement and number of \emph{knots} (positions at which different polynomial basis' elements are spliced together). The two jointly control the maximal flexibility of functions that can be modelled using the chosen basis. In principle any desired degree of flexibility can be achieved by simply fixing the basis order and the number and density of knots to high enough levels \citep{micula2012handbook}. For robustness and stability of coefficients the De Boor's recursion formula \citep{de1978practical} defining B-splines is most commonly used in practice. We follow this convention to define the basis $u$, and define the basis $h$ via related I-splines to guarantee monotonicity. We describe both spline types in detail below.

%\begin{itemize}[leftmargin=*]
% \item
\textbf{B-splines} (or related M-Splines) are defined by the number and locations of \emph{knots}, as well as the order of the polynomials used. Let $\kappa_i \in \mathbbm{R}$ $(i=1,\dots,\mathcal{K})$ denote the knots (heuristically these are the positions at which splines are anchored), and let $B_i(x|k)$ denote the $i$th B-spline of the $k$th order evaluated at $x$. The notation for M-spline is defined analogously. M-splines and B-splines are defined by the recurrence relations:
\begin{equation}\label{eq:M_and_B_splines}
  \begin{split}
    M_i(x|0) &:= \frac{1}{\kappa_{i+1}-\kappa_i}\mathbbm{1}_{[\kappa_i,\kappa_{i+1})}(x),\\
    M_i(x|k) &:= \frac{(k+1)[(x-\kappa_i)M_i(x|k-1) + (\kappa_{i+k+1}-x)M_{i+1}(x|k-1)]}{k(\kappa_{i+k+1}-\kappa_i)},\, k=1,2,\dots;\\
    B_i(x|k) &= (\kappa_{i+k+1}-\kappa_i)M_i(x|k)/(k+1),\quad k=0,1,\dots;\; i=1,\dots,\mathcal{K}.
  \end{split}
\end{equation}
%\item

\textbf{I-splines} are integrated M-splines \citep{he1998monotone,ramsay1988monotone} and thus the $i$th I-spline of the $k$th order evaluated at $x$ is given by:
\begin{align}
    I_i(x|k) &:= \int_{\kappa_0}^x M_i(u|k)du\notag\\
    &\phantom{:}=  \sum_{j=0}^{M_\kappa-1}\mathbbm{1}_{[\kappa_j,\kappa_{j+1})}(x)\bigg(\mathbbm{1}_{(-\infty,j-k)}(i) %\label{eq:I_splines}\\
    %&\phantom{:=  \sum_{j=0}^{M_\kappa-1}\mathbbm{1}_{[\kappa_j,\kappa_{j+1})}(x)\bigg(} {}
    + \mathbbm{1}_{[j-k,j]}(i)\sum_{m=i}^j\frac{\kappa_{m+k}-\kappa_m}{k+2}M_m(x|k+1)\bigg).\label{eq:I_splines}
\end{align}
Setting $h_i(\cdot):=I_i(\cdot|k)$ and restricting the coefficients $\xi$ to be non-negative guarantees $\xi^Th(\cdot)$ to be monotonically increasing. This restriction can be imposed easily by exponentiating coefficients $\xi$, i.e.\ rather than defining $\lamp$ via \eqref{eq:lamperti_splines} we write, with abuse of notation:
\begin{equation}
  \lamp(\cdot)=\sum_{i=0}^{\mathcal{K}}e^{\xi_i} (h_i(\cdot)-h_i(\bar{v})) =: \left(e^{\xi}\right)^T\left(h(\cdot)-h(\bar{v})\right). \label{eq:exponentiation}
\end{equation}
where exponentiation of a vector is taken to be componentwise.
%\end{itemize}

Between each pair of knots $[\kappa_i,\kappa_{i+1})$, a spline is simply a polynomial. Consequently, to find tight bounds on $\G$ we can break up the domain of $\A$ into intervals $[\kappa_i,\kappa_{i+1})$ $(i=1,\dots,\mathcal{K}-1)$, so that $\G$ becomes a polynomial on each of them, and then use standard methods to find local extrema of $\G$. Global bounds $\lowBd\leq\G$ and $\upBd\geq\G-\lowBd$ are then given by the minimum (resp.\ maximum) of all local bounds. Numerical root-finding schemes might in principle give inaccurate results; however, there exist methods that quantify the error bounds and an additional margin can then be added to offset possibly incurred errors \citep{rump2003ten}. 

\section{Inference algorithm}\label{sec:inference_algo}

Following the introduction of our family of diffusions in Section \ref{sec:main_section}, in this section we develop an MCMC algorithm targeting the posterior for $\theta$ and $\xi$, the coefficients of the spline bases. We use a data-augmentation scheme in which the unobserved parts of the paths $\mathcal{A}:=\{V_t;t\in(t_i,t_{i+1}), i =0,\dots,N-1\}$ are treated as missing values. By targeting the joint posterior distribution $\pi(\mathcal{A},\theta,\xi|\mathcal{D})$ via Gibbs sampling, the distribution of interest, $\pi(\theta,\xi|\mathcal{D})$, would be admitted as a marginal. However, as noted by \citet{roberts2001inference}, a na\"ive augmentation scheme $(\theta,\xi,\mathcal{D})\rightarrow(\theta,\xi,\mathcal{D},\mathcal{A})$ would cause two problems: (i) it prompts for the imputation of an infinite-dimensional object $\mathcal{A}$, which is obviously impossible to achieve on a computer; and (ii) it leads to a chain which cannot mix because $\mathcal{A}$ fully determines the diffusion coefficient. Any update of $\xi$ conditioned on $\mathcal{A}$ is then degenerate as the conditional density is given by a point mass at a current value of $\xi$.

We have set up our spline basis carefully so that these problems can be solved, as follows. First we employ the transformation $\eta$ as defined in \eqref{eq:lamperti}. It can be shown that $X_t = \eta(V_t)$ is a diffusion satisfying
\begin{equation}
    \label{eq:Lamperti_SDE}
    \ud X_t = \alpha(X_t) \ud t + \ud W_t,\quad t\in[0,T],\quad X_0=x_0 := \eta(v_0),
\end{equation}
in particular $X$ has \emph{unit} volatility independent of the parameters $(\theta,\xi)$ \citep{roberts2001inference}. Second, we augment the parameters not with $\mathcal{A}$ but instead with the alternative: $(\theta,\xi,\mathcal{D})\rightarrow(\theta,\xi,\mathcal{D},\mathcal{S}(\mathcal{A}))$, where $\mathcal{S}(\mathcal{A})$ denotes a finite-dimensional \emph{surrogate} for the unobserved path $\mathcal{A}$. By an appropriate choice of surrogate, we can ensure that it is possible to sample $\mathcal{S}(\mathcal{A})$ without having to discretize time as would be necessary when sampling $\mathcal{A}$ directly, and we can also ensure that the output is an almost surely finite-dimensional random variable (termed a \emph{skeleton}) from which an entire path $\mathcal{A}$ can be reconstructed if needed. We use a \emph{path-space rejection sampler} with proposals based on Brownian bridges \citep{beskos2008factorisation}. This specific approach (in which a skeleton is used) was suggested and developed in \citet{beskos2006exact} and \citet{sermaidis_markov:2013}. 

Inference is then performed by Gibbs-type updates, alternately (a) updating unknown parameters by drawing from $\pi(\theta,\xi|\mathcal{S}(\mathcal{A}),\mathcal{D})$ and then (b) imputing the unobserved path by sampling from $\pi(\mathcal{S}(\mathcal{A})|\theta,\xi,\mathcal{D})$. As required, the marginal distribution of the parameter chain converges to the posterior distribution $\pi(\theta,\xi|\mathcal{D})$. For step (a) we use a Metropolis--Hastings step, following \cite{sermaidis_markov:2013} (see Appendix). Step (b) is where we employ path-space rejection sampling. We summarise the approach in Algorithm \ref{alg:mcmc}, which we will now describe in detail.

The algorithm accepts as input $\kappa^{(\theta)}$ and $\kappa^{(\xi)}$, used to denote the vectors of knots for drift and volatility coefficients respectively, and $\mathcal{O}^{(\theta)}$ and $\mathcal{O}^{(\xi)}$, which denote the orders of the respective bases. Computations performed for steps \ref{algStep:psrs} and \ref{algStep:mhprob} depend directly on this quadruplet of fixed parameters.

\begin{algorithm}[t]
\caption{Flexible and Exact Bayesian MCMC for discretely observed diffusions}\label{alg:mcmc}
\begin{algorithmic}[1]
\Require{$\kappa^{(\theta)},\kappa^{(\xi)},\mathcal{O}^{(\theta)}, \mathcal{O}^{(\xi)}, M, (\theta^{(0)}, \xi^{(0)})$}
\Ensure{$\{(\theta^{(n)},\xi^{(n)})\}_{n=1}^M$}

% \Procedure{MCMC}{$\kappa^{(\theta)},\kappa^{(\xi)},\mathcal{O}^{(\theta)}, \mathcal{O}^{(\xi)}, q(\cdot), M$}
\For{$n=1,\dots, M$}
\State Draw $\mathcal{S}(\mathcal{A})^{(n)}\sim\pi(\mathcal{S}(\mathcal{A})|\theta^{(n-1)},\xi^{(n-1)},\mathcal{D})$\label{algStep:psrs}\Comment{See Appendix \ref{apx:supplement}}
\State Draw $(\theta^{\circ},\xi^{\circ})\sim q((\theta^{(n-1)},\xi^{(n-1)}),\cdot)$ \label{algStep:paramupdate}\Comment{As per \eqref{eq:qstep1} and \eqref{eq:qstep2}}
\vspace{0.5em}
\State Set $a\leftarrow1\wedge\dfrac{q((\theta^{\circ},\xi^{\circ}),(\theta^{(n-1)},\xi^{(n-1)}))\pi(\theta^{\circ},\xi^{\circ},\mathcal{S}(\mathcal{A})^{(n)},\mathcal{D})}{q((\theta^{(n-1)},\xi^{(n-1)}),(\theta^{\circ},\xi^{\circ}))\pi(\theta^{(n-1)},\xi^{(n-1)},\mathcal{S}(\mathcal{A})^{(n)},\mathcal{D})}$\label{algStep:mhprob}
\vspace{0.5em}
\If{$U <a$, with $U\sim \texttt{Unif}[0,1]$}
\State Set $(\theta^{(n)},\xi^{(n)})\leftarrow (\theta^{\circ},\xi^{\circ})$
\Else
\State Set $(\theta^{(n)},\xi^{(n)})\leftarrow (\theta^{(n-1)},\xi^{(n-1)})$
\EndIf
\EndFor
% \State \textbf{return} $\{(\theta^{(n)},\xi^{(n)})\}_{n=0}^M$
%\EndProcedure
\end{algorithmic}
\end{algorithm}

First considering the parameter update step (Algorithm \ref{alg:mcmc} Step \ref{algStep:paramupdate}), often it is not possible to sample from $\pi(\theta,\xi|\mathcal{S}(\mathcal{A}),\mathcal{D})$ directly. (More precisely, this conditional density is given up to a constant by the right-hand side of \eqref{eq:joint_param_sampler}, treated as a function of $(\theta,\xi)$ only. Evidently the dependence on $\theta$ and $\xi$ is rather complicated in general.) However, because the joint density $\pi(\theta,\xi,\mathcal{S}(\mathcal{A}),\mathcal{D})$ can be computed in closed form (see \eqref{eq:joint_param_sampler} and \citet[Thm 3]{sermaidis_markov:2013}), it is possible to employ a Metropolis--Hastings correction. To update a large number of parameters at once we further exploit gradient information to improve the quality of proposals. In particular, we employ a Metropolis-adjusted Langevin algorithm \citep[MALA;][]{roberts1996exponential} and update all coordinates of $\xi$ and $\theta$ at once by defining the proposal $q((\theta^{(n-1)},\xi^{(n-1)},\cdot)$ via
\begin{align}
\xi^{\circ}&\sim\mathcal{N}\left(\xi^{(n-1)}+\delta_1 \nabla_{\xi}\log\left(\jointDist{(n-1)}{(n)}\right); \delta_2^2 I\right), \label{eq:qstep1}\\
\theta^{\circ}&\sim\mathcal{N}\left(\theta^{(n-1)}+\delta_3 \nabla_{\theta}\log\left(\pi(\theta^{(n-1)},\xi^{\circ},\mathcal{S}(\mathcal{A})^{(n)},\mathcal{D})\right); \delta_4^2 I\right), \label{eq:qstep2}
\end{align}
where $I$ is the identity matrix of appropriate size and $\delta_1,\dots,\delta_4$ are tuning parameters. To compute $\nabla_\theta\log(\jointDist{}{})$ and $\nabla_\xi\log(\jointDist{}{})$ in a spline context we require only the quantities $\nabla_\theta\lowBd$ and $\nabla_\theta\upBd$. It is not always possible to find the closed form expressions for those two quantities, but instead a finite difference scheme can be employed. 

Now consider the imputation of the unobserved path (Algorithm \ref{alg:mcmc} Step \ref{algStep:psrs}). The following arguments may be found in \citet{beskos2005exact}, \citet{beskos2006retrospective}, \citet{beskos2008factorisation}, and \citet{sermaidis_markov:2013}, and we give only a brief summary. To explain the idea, suppose for the moment that we have discrete observations directly from \eqref{eq:Lamperti_SDE}; that is, $\lamp$ is simply the identity and we are interested in inference of $\theta$ only. To impute the finite-dimensional surrogate variable $\mathcal{S}(\mathcal{A})$ it is now enough to employ $N$ independent path-space rejection samplers, each for a separate interval $(t_i,t_{i+1})$, $i=0,\dots,N-1$, so for simplicity consider a single interval $[0,t]$ with $X_0 = x$ and $X_t = y$. Denoting the law of the bridge $(X|X_0 = x, X_t = y)$ under \eqref{eq:Lamperti_SDE} by $\mathbbm{P}^{(t,x,y)}$ and a Brownian bridge connecting the same points by $\mathbbm{W}^{(t,x,y)}$, we have that
\begin{equation}
    \label{eq:RND}
\frac{d\mathbbm{P}^{(t,x,y)}}{d\mathbbm{W}^{(t,x,y)}} \propto \exp\left(-\int_0^t [\G(X_s) - \lowBd]\ud s\right).
\end{equation}
We recognise the right-hand side as the probability that a Poisson process of unit intensity on $[0,t]\times[0,\upBd]$ has zero points beneath the graph of $s\mapsto \G(X_s) - \lowBd$, allowing for a rejection sampler to be implemented using Brownian bridge proposals and with acceptance probability \eqref{eq:RND} (even though the expression is intractable with finite resources), an example of \emph{retrospective simulation}. This motivates the choice of $\mathcal{S}(\mathcal{A})$ as
\[
\mathcal{S}(\mathcal{A}) := \{\{Z_{\psi_{j}},\{\psi_{j}, \chi_{j}\}\}_{j=1}^{\varkappa},\varkappa\},
\]
where $\{\psi_{j}, \chi_{j}\}_{j=1}^{\varkappa}$ is a unit-intensity Poisson Point Process on $[0,t]\times[0,\upBd]$ and $Z \sim \mathbbm{W}^{(t,x,y)}$. (By convention $\{\cdot\}_{j=1}^0:=\emptyset$.) By inspection of \eqref{eq:RND}, this choice of $\mathcal{S}(\mathcal{A})$ can now be simulated by rejection: (i) Simulate the Poisson process $\{\psi_{j}, \chi_{j}\}_{j=1}^{\varkappa}$, (ii) Simulate $Z$ at the times $\psi_1,\dots,\psi_{\varkappa}$; (iii) Accept $\mathcal{S}(\mathcal{A})$ if $\G(Z_{\psi_j})-\lowBd < \chi_j$ for each $j=1,\dots,\varkappa$.

In the more general case, when we have discrete observations from a diffusion of the form \eqref{eq:master_SDE} rather than \eqref{eq:Lamperti_SDE}, considerable further complication is introduced by the fact that, following an application of the Lamperti transformation, the datapoints $\{\lamp(v_{t_i}):\: i=1,\dots,N\}$ now depend on a parameter of interest. Brownian bridges of the form  $\mathbbm{W}^{(t_{i+1}-t_i,\lamp(v_{t_i}),\lamp(v_{t_{i+1}}))}$ are now inapplicable as dominating measure in each interval. This issue is resolved by a further re-parametrization which in some contexts is known as \emph{noise outsourcing}; details are given in the Appendix.

We now provide a formal verification that path-space rejection sampling theory can be applied in our spline context, which suffices for path-space rejection sampling within MCMC.

\begin{theorem}\label{thm:amenability_of_splines}
Let a diffusion model be defined by the Lamperti transformation $\lamp$ and an anti-derivative of a drift of a Lamperti-transformed diffusion $\A$ as in \eqref{eq:lamperti}. Suppose further that $\lamp$ and $\A$ can be expanded in I-spline and B-spline bases as in \eqref{eq:lamperti_splines}, with bases orders fixed to $\mathcal{O}^{(\xi)}\geq 3$ and $\mathcal{O}^{(\theta)}\geq 3$. Then it is possible to simulate exactly from $\pi(\mathcal{S}(\mathcal{A})|\theta,\xi,\mathcal{D})$.
\end{theorem}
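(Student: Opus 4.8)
The plan is to establish that the diffusion model specified by the spline pair $(\lamp,\A)$ of \eqref{eq:lamperti_splines}--\eqref{eq:exponentiation} satisfies the regularity conditions under which path-space (retrospective) rejection sampling is valid in the data-augmented form of \citet{sermaidis_markov:2013}, so that the conclusion is inherited from the existing exact-algorithm machinery. I would reduce the statement to checking four properties of the spline representations: (i) $\lamp$ is a genuine Lamperti transform, i.e.\ a strictly increasing $C^1$ diffeomorphism from the state space of $V$ (taken to lie within the knot span) onto its image, with Jacobian $D_\xi=\lamp'$ available in closed form; (ii) the Lamperti drift $\drift=\A'$ is continuously differentiable (so that $\G=(\drift^2+\drift')/2$ as in \eqref{eq:generic_eq_semi_param} is well defined) and globally Lipschitz, guaranteeing well-posedness of the transformed SDE and validity of Girsanov's theorem; (iii) $\G$ admits finite, computable global bounds $\lowBd\le\G\le\lowBd+\upBd$; and (iv) the law of the transformed diffusion bridge on each inter-observation interval is absolutely continuous w.r.t.\ Brownian-bridge law with a Radon--Nikodym density that, after a deterministic rescaling, is bounded by $1$. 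Granting \eqref{eq:generic_eq_semi_param} and the discussion around it, everything reduces to (i)--(iii), and (iv) is then immediate.

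For (i) and (ii) the only inputs are the smoothness and support of the bases. With $\mathcal{O}^{(\xi)}\ge 3$, each $h_i=I_i(\cdot|\mathcal{O}^{(\xi)})$ is by construction a non-decreasing piecewise polynomial, of class $C^{\mathcal{O}^{(\xi)}}\subseteq C^{3}$ at simple knots, with $h_i'=M_i(\cdot|\mathcal{O}^{(\xi)})\ge 0$; under the exponential parametrisation \eqref{eq:exponentiation} all coefficients $e^{\xi_i}$ are strictly positive, so $\lamp'=(e^{\xi})^Th'$ is a strictly positive combination of M-splines, hence $>0$ on the interior of the knot span (at each point at least one M-spline is strictly positive). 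Thus $\lamp$ is strictly increasing and $C^{\mathcal{O}^{(\xi)}}$ there, with $D_\xi=(e^{\xi})^Th'$ a closed-form piecewise polynomial; provided the knots are placed, as in Section~\ref{sec:compconsiderations}, so that the state space of $V$ lies inside the knot span, $\lamp$ is the required diffeomorphism. Likewise, with $\mathcal{O}^{(\theta)}\ge 3$ each $u_i=B_i(\cdot|\mathcal{O}^{(\theta)})$ is a compactly supported piecewise polynomial of class $C^{\mathcal{O}^{(\theta)}-1}\subseteq C^{2}$, so $\A=\theta^Tu\in C^2$, $\drift=\theta^Tu'\in C^1$ is bounded and Lipschitz (compact support), and $\G=\tfrac12\big((\theta^Tu')^2+\theta^Tu''\big)$ is a continuous, compactly supported piecewise polynomial.

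The substantive step is (iii). Because the $u_i$ are compactly supported, $\A$, $\drift$ and $\G$ vanish identically off the compact interval spanned by the knots; on that interval, restricted to any cell $[\kappa_j,\kappa_{j+1})$, $\G$ is a polynomial of bounded degree, whose extrema over the closed cell are attained either at the two endpoints or at its finitely many stationary points, i.e.\ the roots of a polynomial (its derivative) on a bounded interval. Certified univariate root-finding on a bounded interval is a finite computation, and a safety margin can be added to absorb numerical error \citep{rump2003ten}; taking the minimum and maximum over this finite collection of cell-wise candidate values, together with the value $0$, produces the finite global bounds $\lowBd\le\G\le\lowBd+\upBd$. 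This is exactly the procedure described in Section~\ref{sec:main_section}, and, crucially, it is exact (up to the certified tolerance) and yields tight bounds regardless of the number $M_\theta$ of basis functions, in contrast to the crude estimate \eqref{eq:generic_bounds_on_G}. I expect this to be the main obstacle in the sense that it is the only place where the choice of a spline basis, rather than a generic one, is genuinely exploited; the other delicate point is the non-degeneracy of $\lamp$, since the M-splines composing $\lamp'$ are compactly supported and so $\lamp'\to 0$ off the knot span --- this is handled, as the paper already notes, by choosing the knot span to cover the data range with a margin and treating the state space of $V$ as contained therein.

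Finally, with (i)--(iii) established, on each interval $(t_i,t_{i+1})$ the law of the Lamperti-transformed diffusion pinned at $\lamp(v_{t_i})$ and $\lamp(v_{t_{i+1}})$ has, by Girsanov's theorem and It\^o's formula (the stochastic integral being eliminated via $\A$), a density w.r.t.\ Brownian-bridge law proportional to $\exp\{-\int_{t_i}^{t_{i+1}}\G(X_s)\,\ud s\}$; since $0\le\G-\lowBd\le\upBd$, after multiplying by the deterministic constant $\exp\{\upBd\,(t_{i+1}-t_i)\}$ this density is bounded by $1$, so the Poisson-thinning (``Poisson coin'') construction underlying the exact algorithm applies and yields a draw of the finite-dimensional surrogate $\mathcal{S}(\mathcal{A})$ --- a Poisson number of space-time points on each interval --- exactly, conditionally on $\mathcal{D}$ (transformed through $\lamp$). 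Combining this with the closed form of the joint density $\pi(\theta,\xi,\mathcal{S}(\mathcal{A}),\mathcal{D})$ and the validity of the exact algorithm within MCMC established in \citet[Thm~3]{sermaidis_markov:2013} completes the argument.
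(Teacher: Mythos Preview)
Your proposal is correct and follows essentially the same route as the paper's proof: both arguments verify that the spline orders $\mathcal{O}^{(\xi)}\ge 3$, $\mathcal{O}^{(\theta)}\ge 3$ deliver $\lamp\in C^3$ with $\lamp'>0$ and $\A\in C^2$, whence $\drift=\A'\in C^1$, and then check off the standing conditions of \citet{beskos2006exact,sermaidis_markov:2013} (existence of the antiderivative $\A$, finite global bounds on $\G$). The only cosmetic difference is that the paper also pulls the smoothness back through \eqref{eq:transf_original_coefs} to verify local Lipschitz continuity of $(\driftOrig,\vola)$ and hence well-posedness of the \emph{original} SDE, and it dispatches the existence of bounds on $\G$ simply by citing \eqref{eq:generic_bounds_on_G}, whereas you work at the Lamperti level and spell out the cell-wise polynomial-extremum argument (which the paper places in the surrounding text rather than in the proof); neither changes the substance.
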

\begin{proof}
Denote the domain over which $\lamp$ is defined with $\mathcal{R}^{(\xi)}:=[\kappa^{(\xi)}_1, \kappa^{(\xi)}_{\mathcal{K}^{(\xi)}}]$ and similarly the domain over which $\A$ is defined with $\mathcal{R}^{(\theta)}:=[\kappa^{(\theta)}_1, \kappa^{(\theta)}_{\mathcal{K}^{(\theta)}}]$. The algorithm never evaluates $\lamp$, $\A$ nor any of their derivatives outside of these two intervals (see Section \ref{sec:compbases}) and consequently the extensions of $\lamp$ and $\A$ to $\mathbbm{R}$ can be assumed to satisfy all the relevant regularity conditions outside of $\mathcal{R}^{(\xi)}$ and $\mathcal{R}^{(\theta)}$.

It follows directly from the definitions \eqref{eq:M_and_B_splines} and \eqref{eq:I_splines} that the $i^{th}$ order I-spline defines a $\mathcal{C}^{i}$ function (on $\mathcal{R}^{(\xi)}$) and $i^{th}$ order B-spline defines a $\mathcal{C}^{i-1}$ function (on $\mathcal{R}^{(\theta)}$). Consequently, with the choices $\mathcal{O}^{(\xi)}\geq 3$ and $\mathcal{O}^{(\theta)}\geq 3$, $\lamp$ is at least $\mathcal{C}^{3}$ on $\mathcal{R}^{(\xi)}$ and $\A$ is at least $\mathcal{C}^{2}$ on $\mathcal{R}^{(\theta)}$. By the construction in \eqref{eq:exponentiation} we know $\lamp'(v)$ is positive, and since
\begin{equation}\label{eq:transf_original_coefs}
\vola(v) = (\lamp'(v))^{-1},\quad \driftOrig(v) =\frac{\A'(\lamp(v))}{\lamp'(v)} - \frac{1}{2}\frac{\lamp''(v)}{(\lamp'(v))^3},
\end{equation}
it follows that $\vola$ is at least $\mathcal{C}^{2}$ on $\mathcal{R}^{(\xi)}$ and $\driftOrig$ is at least $\mathcal{C}^{1}$ on $\mathcal{R}^{(\theta)}$. This implies that $\driftOrig$ and $\vola$ are both locally Lipschitz and since their extensions to $\mathbbm{R}$ are arbitrary it follows that the SDE defined indirectly through $\A$ and $\lamp$ admits a unique solution \citep[Sec.\ 8.2]{karatzas1998brownian}.

To sample from $\pi(\mathcal{S}(\mathcal{A})|\theta,\xi,\mathcal{D})$ imposes additional conditions \citep{beskos2006exact}: $\drift$ must be at least $\mathcal{C}^{1}$, $\exists\,\widetilde{A}_\theta$ such that $\widetilde{A}_\theta'=\lamp$, $\exists\,\lowBd>-\infty$ such that $\lowBd\leq\inf_{u\in\mathbbm{R}}\G(u)$, and  $\exists\,\upBd<\infty$ such that $\upBd\geq\sup_{u\in\mathbbm{R}}\G(u)-\lowBd$. Clearly, as $\A$ is at least $\mathcal{C}^{2}$ on $\mathcal{R}^{(\theta)}$, $\drift:=\A'$ is at least $\mathcal{C}^{1}$. Additionally, by construction $\widetilde{A}_\theta=\A$ and the existence of the requisite bounds follows for instance from \eqref{eq:generic_bounds_on_G}.
\end{proof}

We remark that our choice of spline bases yields a relatively simple form for $\mathcal{S}(\mathcal{A})$; in applications of rejection sampling of diffusions elsewhere it is often necessary to simulate additional information about the diffusion such as its local extrema. Here, the availability of tight global bounds on $G_\theta$ as a consequence of our choice of spline bases in Section \ref{sec:main_section} obviate this complication.

\section{Practical considerations} \label{sec:compconsiderations}
\subsection{Choice of bases}
\label{sec:compbases}
The choice of bases can be regarded as a choice of functional prior on $\lamp(\cdot)$ and $\A(\cdot)$. Bases defined on a compact interval $C$ correspond to priors which are supported only on functions vanishing outside $C$. In principle, $C$ can be made arbitrarily large, eliminating the influence of the truncation of the prior's support. This however requires identification of the regions over which $\A$ and $\lamp$ need to be evaluated. We refer to those two regions as $C({\A})$ and $C({\lamp})$ respectively. In the case of $\lamp$ the procedure is simple: we need only ever evaluate $\lamp$ on the range of observations, so we can make an empirical choice $C({\lamp}):=[\min\{\mathcal{D}\}-\delta, \max\{\mathcal{D}\}+\delta]$, where $\delta\geq 0$ is some margin that allows us to avoid edge effects from the usage of splines.

Determining $C(A_\theta)$ is more difficult because region over which $A_\theta$ is evaluated depends on $\eta_\xi$. We take a pragmatic approach and try to identify a $C(A_\theta)$ which is large enough to draw the same inferential conclusions. Noting that there is little value in making $C(A_\theta)$ so large that it includes basis elements $u$ defined over regions without observations, this gives us a natural way to proceed. We begin by centering $C({\A})$ around the origin by using $\bar{v}$ in \eqref{eq:lamperti_splines} as an anchor. For simplicity we set $\bar{v}:=\frac{1}{N+1}\sum_{i=0}^Nv_{t_i}$, noting that the choice of anchor is one of convenience and any choice will lead to the same results. Next we initialise $C({\A})\leftarrow[-\mathcal{R},\mathcal{R}]$, for some $\mathcal{R}\in\mathbbm{R}_+$, and simply proceed with Algorithm \ref{alg:mcmc}. If at any point $\A$ (or any of its derivatives) needs to be evaluated outside of $[-\mathcal{R},\mathcal{R}]$ we halt Algorithm \ref{alg:mcmc}, double $\mathcal{R}$, and re-execute. 

\subsection{Placement of knots}
\label{sec:compknots}
We now consider the choice of locations and total number of knots as well as the order of polynomials. These award different degrees of flexibility to $\lamp$ and $\A$. A standard approach when using splines is to keep the order of polynomials moderate \citep[anything beyond third order is rarely used;][Sec.\ 5.2]{friedman2001elements}. Once $C(\A)$ and $C(\lamp)$ have been settled, it is \emph{a priori} reasonable to space knots equally across these intervals, except at the boundaries where knots can be duplicated to relax any continuity requirements there \citep[see Appendix to Ch.\ 5 of][]{friedman2001elements}. We take this approach to knot placement throughout our experiments; indeed, it can be viewed as an advantage of our method that good results can be obtained without the need to first fine-tune knot placement. Similarly, a user would ideally not want to have to perform extensive experimentation to determine the \emph{number} of knots. A Bayesian approach to this issue is to allow the user to specify too many knots and to employ a prior which induces an appropriate regularization. We use a Gaussian process prior on the integrated squared derivatives of the fitted function (in this case $\A$ and $\lamp$):
\begin{equation}\label{eq:penalty_on_wiggliness}
 \log\pi(\A, \lamp)\propto -\frac{1}{2}\sum_{k=1}^{K}\left(\lambda_{1,k} \int_{\mathbbm{R}}\left(\A^{(k)}(x)\right)^2dx+\lambda_{2,k} \int_{\mathbbm{R}}\left(\lamp^{(k)}(x)\right)^2dx\right).
\end{equation}
Here, $\cdot^{(k)}$ denotes $k$th order derivative (with respect to $x$) and $\lambda_{i,k}$, $i=1,2$; $k=1,\dots,K$, are the tuning hyper-parameters. For splines, the integrals above become:
\begin{equation*}
  \int_{\mathbbm{R}}\left(\A^{(k)}(x)\right)^2dx=\theta^T\Omega_k\theta,\quad \int_{\mathbbm{R}}\left(\lamp^{(k)}(x)\right)^2dx=\left(e^{\xi}\right)^T\widetilde{\Omega}_k\left(e^{\xi}\right),
\end{equation*}
where matrices:
\begin{equation}
\left\{\Omega_k\right\}_{ij}=\int_{\mathbbm{R}}\left(u^{(k)}_i(x)\right)\left(u^{(k)}_j(x)\right)dx,\quad \left\{\widetilde{\Omega}_k\right\}_{ij}=\int_{\mathbbm{R}}\left(h^{(k)}_i(x)\right)\left(h^{(k)}_j(x)\right)dx,
\end{equation}
are available in closed forms \cite[Sec.\ 5.4]{friedman2001elements}.

We found that additional, stronger prior information is required for the basis functions supported \emph{on the edges} of the intervals $C(\A)$ and $C(\lamp)$ (i.e.\ intervals $[\kappa_{i},\kappa_{i+1})$ with extreme values of $i$). For these, it is possible that only a few observations (or in extreme cases none) fall on the interior of $[\kappa_{i},\kappa_{i+1})$. Consequently, to guarantee the convergence of the Markov chains we impose an additional prior for these functions and shrink the respective $\theta_i$ and $\xi_i$ towards $0$. This is accomplished by defining diagonal matrices $P$ and $\widetilde{P}$ with non-negative diagonal elements, where large values indicate strong shrinkage of respective basis element towards $0$. An additional penalty $\theta^TP\theta+(e^\xi)^T\widetilde{P}(e^\xi)$ can then be added on to \eqref{eq:penalty_on_wiggliness}.

As a result, we end up with a prior of the form:
\begin{equation}
\label{eq:main_prior}
  \pi(\theta,\xi)\propto\exp\left\{ -\frac{1}{2}\theta^T\left(P+\sum_{k=1}^K\lambda_{1,k}\Omega_k\right)\theta \right\}\exp\left\{-\frac{1}{2}\left(e^{\xi}\right)^T\left(\widetilde{P}+\sum_{k=1}^K\lambda_{2,k}\widetilde{\Omega}_k\right)\left(e^{\xi}\right)\right\}.
\end{equation}

We have chosen for simplicity a prior for which $\theta$ and $\xi$ are independent. For example, placing independent priors on parameters of $b$ and $\sigma$ (rather than of $\eta$ and $A$) would, from \eqref{eq:lamperti}, induce priors on $\theta$ and $\xi$ which are not necessarily independent. However, if the data contains evidence of some unassumed dependence then this should of course ultimately reveal itself in the posterior. It would be straightforward to incorporate any prior knowledge about a correlation between $\theta$ and $\xi$ into the prior if desired; Algorithm \ref{alg:mcmc}, and specifically equation \eqref{eq:joint_param_sampler}, does not rely on a product form for $\pi(\theta,\xi)$. Similarly, one could consider priors other than the Gaussian form appearing in \eqref{eq:main_prior} though this could potentially introduce computational costs elsewhere. For example, a Gaussian prior ensures that the derivatives appearing in the MALA updates \eqref{eq:qstep1}--\eqref{eq:qstep2} remain well-behaved while other priors may not.

We follow a number of heuristics to reduce the dimensionality of the hyper-parameters in \eqref{eq:main_prior} following \citet{friedman2001elements}. In practice it is often sufficient to penalize only one integrated derivative of the $k^*$-th order and set other $\lambda_{i,k}=0$ $(k\neq k^*)$. Additionally, only extreme entries on the diagonals of $P$ and $\widetilde{P}$ need to be set to non-zero values and the algorithm is often quite robust to the actual values chosen. Consequently, the problem of parameter tuning is often reduced to dimension 3--4, and the final search for the most fitting values for the hyper-parameters can be completed by validation; that is, by splitting the dataset into training and testing parts, training the model on the former, evaluating the likelihood on the latter, and keeping the model with the highest likelihood attained on the test dataset.

\subsection{Computational cost}
\label{sec:compcost}
The computational cost of the algorithm will depend on all of its parameters and hyperparameters in a complicated way in general. However, we can pick out the main influences on this cost by noting that the most expensive part of Algorithm \ref{alg:mcmc} is typically Step \ref{algStep:psrs}, which employs a rejection sampler for each of the $N$ inter-observation intervals in order to simulate a set of skeleton points of a diffusion bridge. For a diffusion $X$ satisfying \eqref{eq:Lamperti_SDE} with say $X_{0} = x$ and $X_{\Delta_i} = y$, the acceptance probability for a proposed surrogate $\mathcal{S}(\mathcal{A})$, when using $\mathbbm{W}^{(\Delta_i,x,y)}$ as a proposal law, is
\begin{equation}
    \label{eq:acceptance}
\mathbbm{E}_{\mathbbm{W}^{(\Delta_i,x,y)}}\left[\exp\left(-\int_{0}^{\Delta_i}[\G(X_s) - \lowBd] ds\right)\right].
\end{equation}
This probability decays exponentially in $\Delta_i$; thus, we should expect the efficiency of the algorithm to diminish exponentially in the observation spacing. Conversely, as $\Delta_i$ decreases the latent bridge better resembles a Brownian bridge and the acceptance probability \eqref{eq:acceptance} goes to 1 as $\Delta_i \to 0$. It is further worth noting that, owing to the Markov property of the diffusion, the bridge between each pair of observations can be treated independently. Therefore for a fixed observation spacing the computational cost of Step \ref{algStep:psrs} is at most $O(N)$ in the number of observations $N$ as $N\to\infty$. If there are opportunities to exploit parallelization in the implementation then this cost can be reduced further. One can exploit the linear cost in $N$ to counteract the exponential cost in $\Delta_i$ by imputing additional datapoints between existing observations; see \citet[Section 3.4 and Section 4]{sermaidis_markov:2013} for this and other strategies on boosting efficiency, and \citet{pel:rob:2012} for an extensive empirical study.

\section{Numerical Experiments}\label{sec:numerical}

\subsection{Illustrative dataset} \label{sec:illustrative}
We begin our numerical experiments by considering an illustrative dataset to determine whether our methodology can recover the (known) underlying generative mechanism. We simulate 2001 equally spaced observations with inter-observation distance set to $0.1$ from the SDE:
\begin{equation}\label{eq:toy_data_SDE}
  \ud V_t = -V_t(V_t^2-1)\ud t + 1/(1+V_t^2) \ud W_t,\quad V_0=1,\quad t\in[0,200].
\end{equation}

\begin{figure}[ht]
    \centering
    %\captionsetup{width=35pc}
    \includegraphics[width=0.8\textwidth]{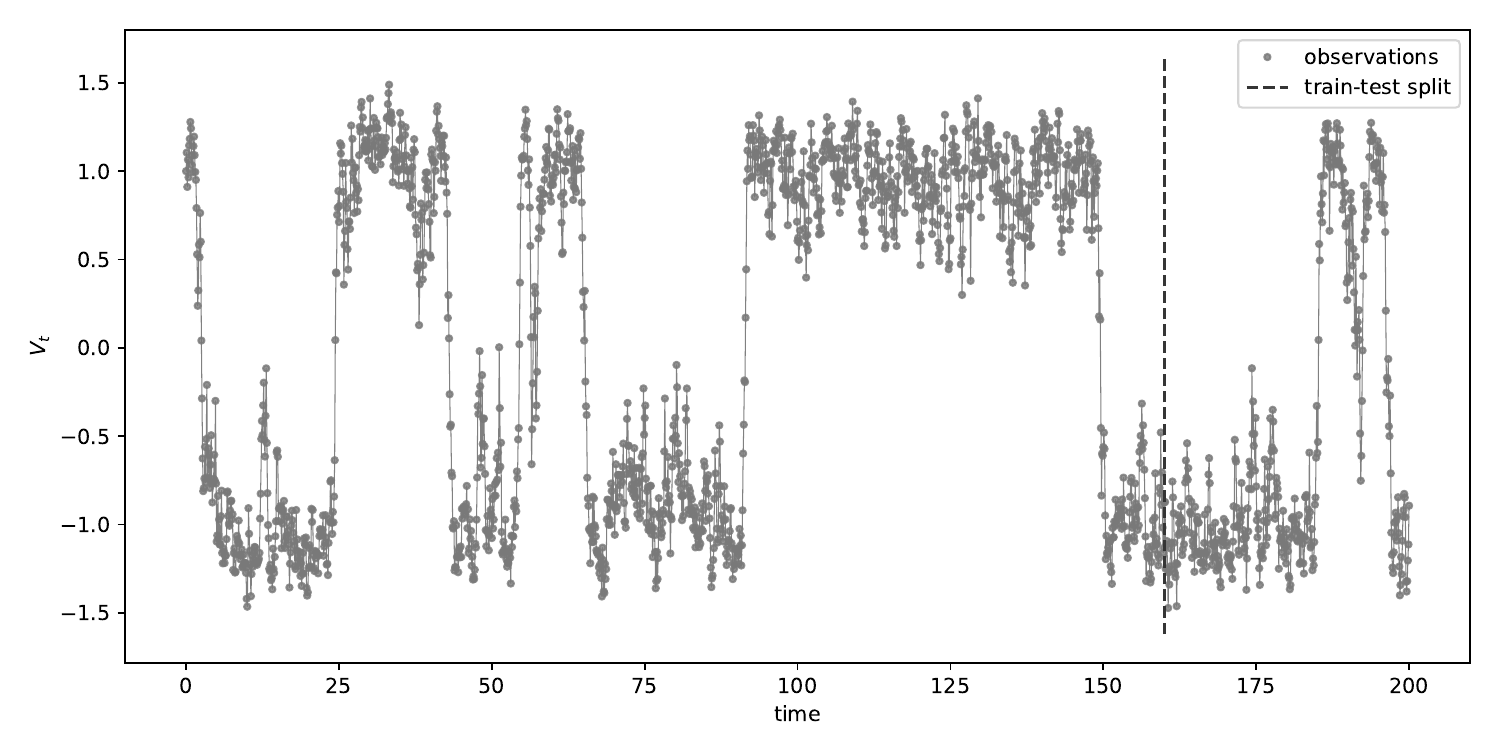}
    \caption{Observations of the underlying process, generated according to the SDE \eqref{eq:toy_data_SDE}. Dashed line indicates the split between the training (left) and test (right) dataset.\label{fig:toy_example_obs}}
\end{figure}

The simulated data is plotted in Figure \ref{fig:toy_example_obs}. (Recall the partition of the data into separate training and testing parts for the purpose of hyperparameter tuning; see Section \ref{sec:compknots}.) The observations range between $-1.53$ and $1.51$. Because of the well-behaved form of the drift and diffusion coefficient we could set the total number of knots to a moderate value and closely recover the two functions without resorting to strong priors. However, because in general we might have no prior information about the underlying process we follow the general principles presented in Section \ref{sec:compconsiderations} of over-specifying the total number and density of knots and relying on the regularization property of priors to see how faithfully the truth can be recovered. For bases $h$ we set $15$ equidistant knots between $-2$ and $2$, and further place three additional knots on each extreme value (in total there are four knots on $-2$ and four on $2$). The addition of extra margins on both sides of the observed range increases the flexibility of the function $\lamp$ near the end-points, contributing to a speed-up of Algorithm \ref{alg:mcmc}. Similarly, for basis $u$ we set $15$ equidistant knots between $-4$ and $4$ and place an additional $4$ knots on each of the extreme values (on $-4$ and $4$).

The order of the polynomial basis $h$ is set to $3$, which results in splines of the $3$rd order approximating the function $1/\sigma_\xi(\cdot)$. The order of basis $u$ is set to $4$ so that splines approximating the drift function $\drift(\cdot)$ are also of 3rd order, under the reasoning that the functions $1/\vola(\cdot)$ and $\drift(\cdot)$ could be expected to have similar smoothness properties \emph{a priori}. With these orders all derivatives needed by the algorithm for various computations exist and do not vanish. Additionally, the choice is in agreement with a common principle of keeping the order of the polynomials moderate \cite[Sec.\ 5.4]{friedman2001elements}. $\bar{v}$ was set to $0$. We ran the algorithm for $M=600,000$ iterations exploring various choices of hyper-parameters, specifically $\lambda_{1,3} \in \{10^{-3},10^{-2},\dots,10^2\}$ and $\lambda_{2,2} \in \{10^{-4},10^{-3},\dots,10^2\}$, and the final estimates (resulting in the highest values of averaged, noisy estimates of the likelihood) are presented in Figure \ref{fig:toy_example_fits}. Here $\lambda_{1,3}=0.1$, $\lambda_{2,2}=0.1$ were used for the prior and all other $\lambda_{i,k}$ were set to zero.

Recall that the algorithm aims to directly infer $\A$ and $\lamp$ ($\drift$ and $\lamp$ are given in the top row of Figure \ref{fig:toy_example_fits}). The posterior draws closely resemble the true functions $\alpha$ and $\lamp$ on the range of observations. Functions $\driftOrig$ and $\vola$ are computed as byproducts from $\A$ and $\lamp$ via identities \eqref{eq:transf_original_coefs} and hence small deviations of the latter from the truth result in larger deviations of $\driftOrig$ and $\vola$. This, together with the less dense observations around the origin, is the reason why the posterior draws of the drift coefficient exhibit increased uncertainty in this region. 

%\noindent
\begin{figure}[!t]
    \centering
    \includegraphics[width=\textwidth]{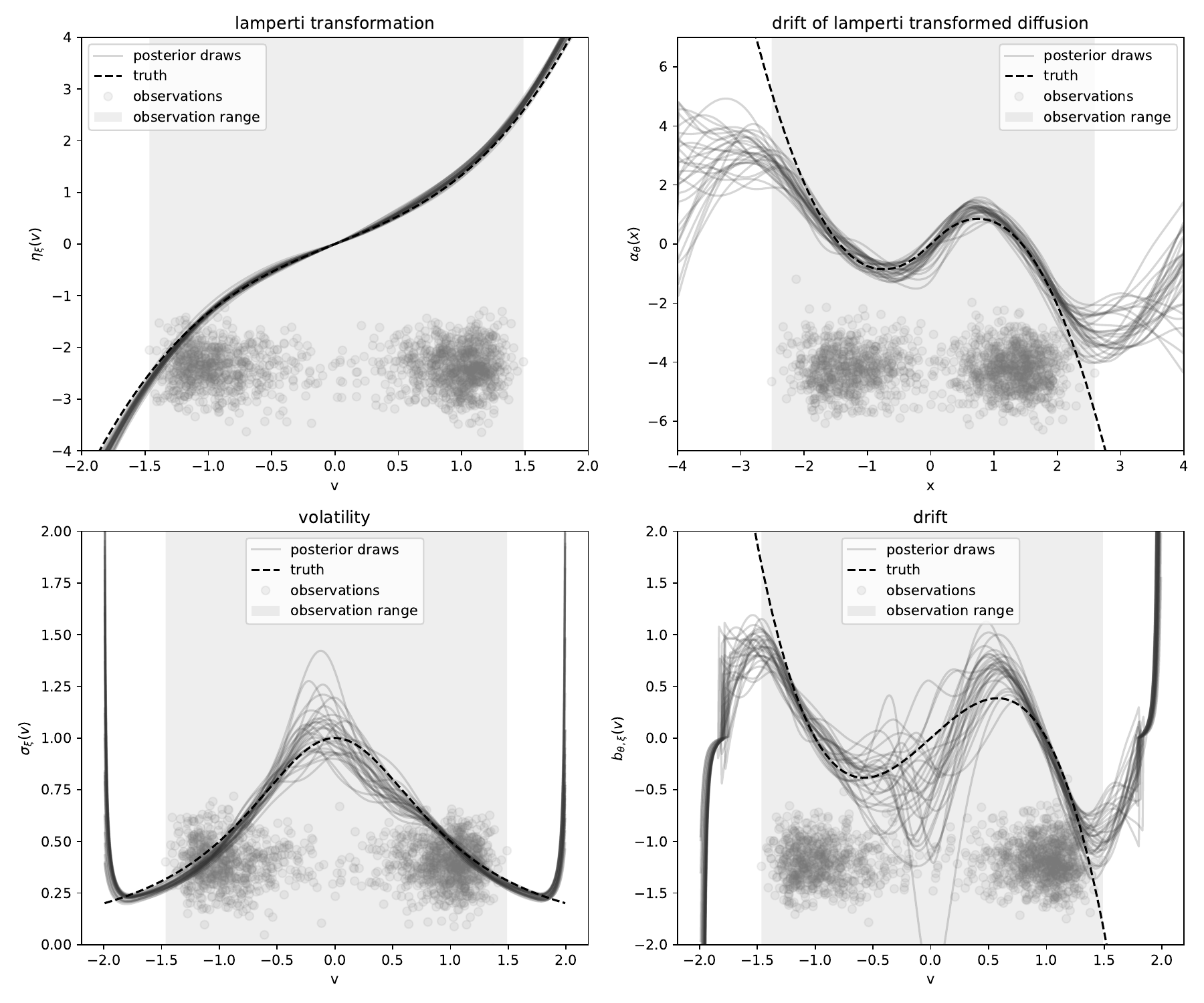}
    \captionof{figure}{Fits to the illustrative dataset. Dashed lines represent the true functions used to generate the data. The MCMC chain was run for 600,000 iterations and 30 random draws from the last 300,000 steps were plotted (solid curves). The upper right plot is on the Lamperti-transformed axis, $x=\lamp(v)$. Clouds of observations are plotted to illustrate the differences in the amount of data falling in different regions of space. For the observation scatterplot, the $y$-axis has no direct interpretation and is simply a jitter added to aid visualization.
    \label{fig:toy_example_fits}}    
\end{figure}

\subsection{Finance dataset}

In this section we consider modelling the U.S.\ short-term riskless interest rate. \cite{bali_comprehensive_2006} review some of the methods proposed in the econometrics literature for modelling this process. Following \cite{stanton_nonparametric_1997}, we use the three-month U.S. Treasury bills (T-bills) as its proxy. For the sake of fair comparison, we use the same test dataset as \cite{stanton_nonparametric_1997}: daily recordings of U.S. Treasury bills' rate from 1965 to 1995 (Figure \ref{fig:finance_example_obs}).

\begin{figure}[ht]
  \begin{center}
    \includegraphics[width=0.8\textwidth]{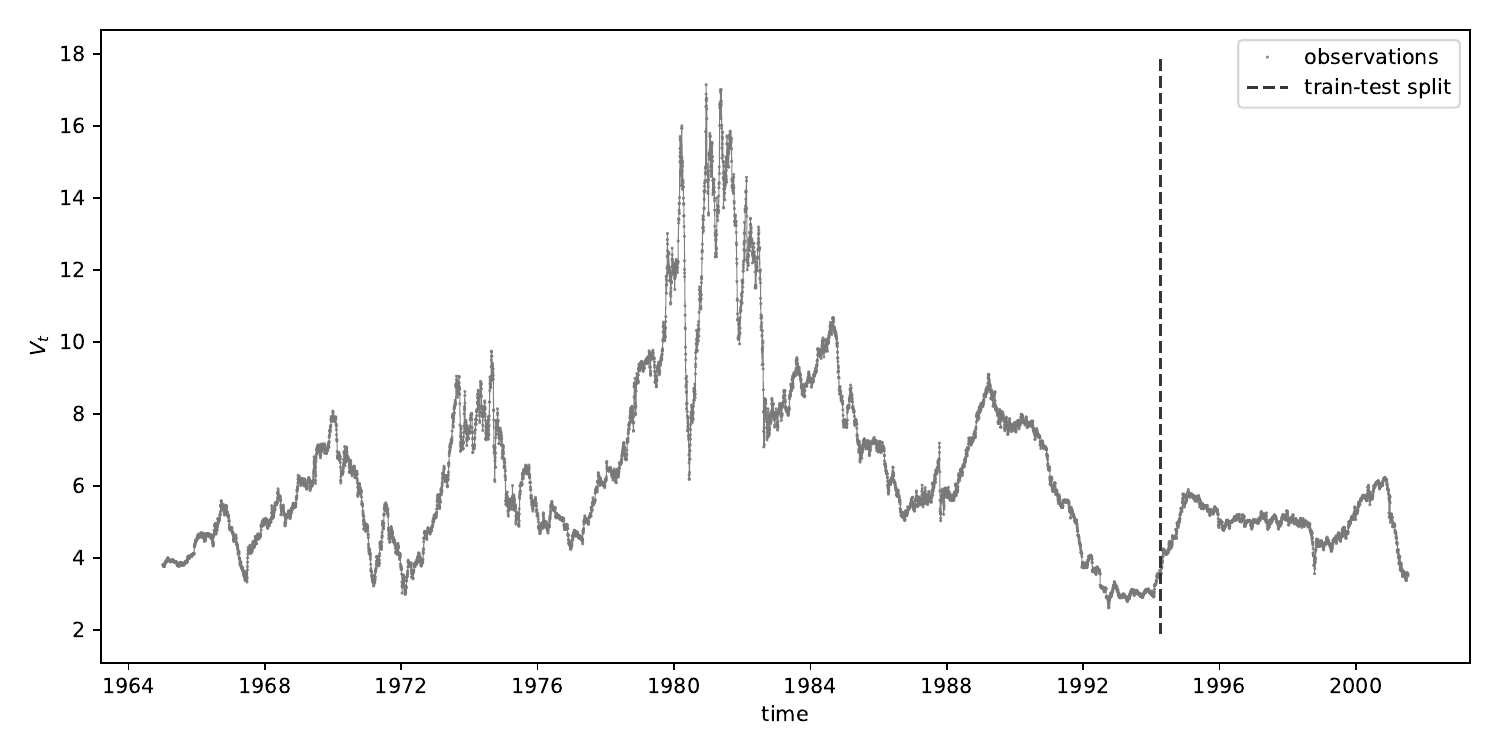}
    \caption{Daily observations of the three-month U.S. Treasury bills' rates 1965--2001.\label{fig:finance_example_obs}}
  \end{center}
\end{figure}

Following our methodology, we set the polynomial orders of bases $u$ and $h$ to $4$ and $3$ respectively, following the reasoning as in Section \ref{sec:illustrative}. We also set the knots of basis $u$ and $h$ respectively to:
\begin{align*}
  \kappa^{({\theta})}&\leftarrow\{-15_{(4)},-10.5, -6, -1.5, 3, 7.5, 12, 16.5, 21, 25.5, 30_{(4)}\},\\
    \kappa^{({\xi})}&\leftarrow\{0_{(4)},2,4,6,8,10,12,14,16,18,20_{(4)}\},
\end{align*}
where $_{(n)}$ in the subscript denotes the multiplicity of a knot (which in absence of the subscript is by default set to $1$). $\bar{v}$ was set to $5$. Tuning parameters were set to $\lambda_{1,3}=6000$, $\lambda_{2,2}=1$ (with all others set to $0$) and Algorithm \ref{alg:mcmc} was run for 200,000 iterations. The results are given in Figure \ref{fig:finance_fits}.

\begin{figure}[!t]
    \includegraphics[width=\textwidth]{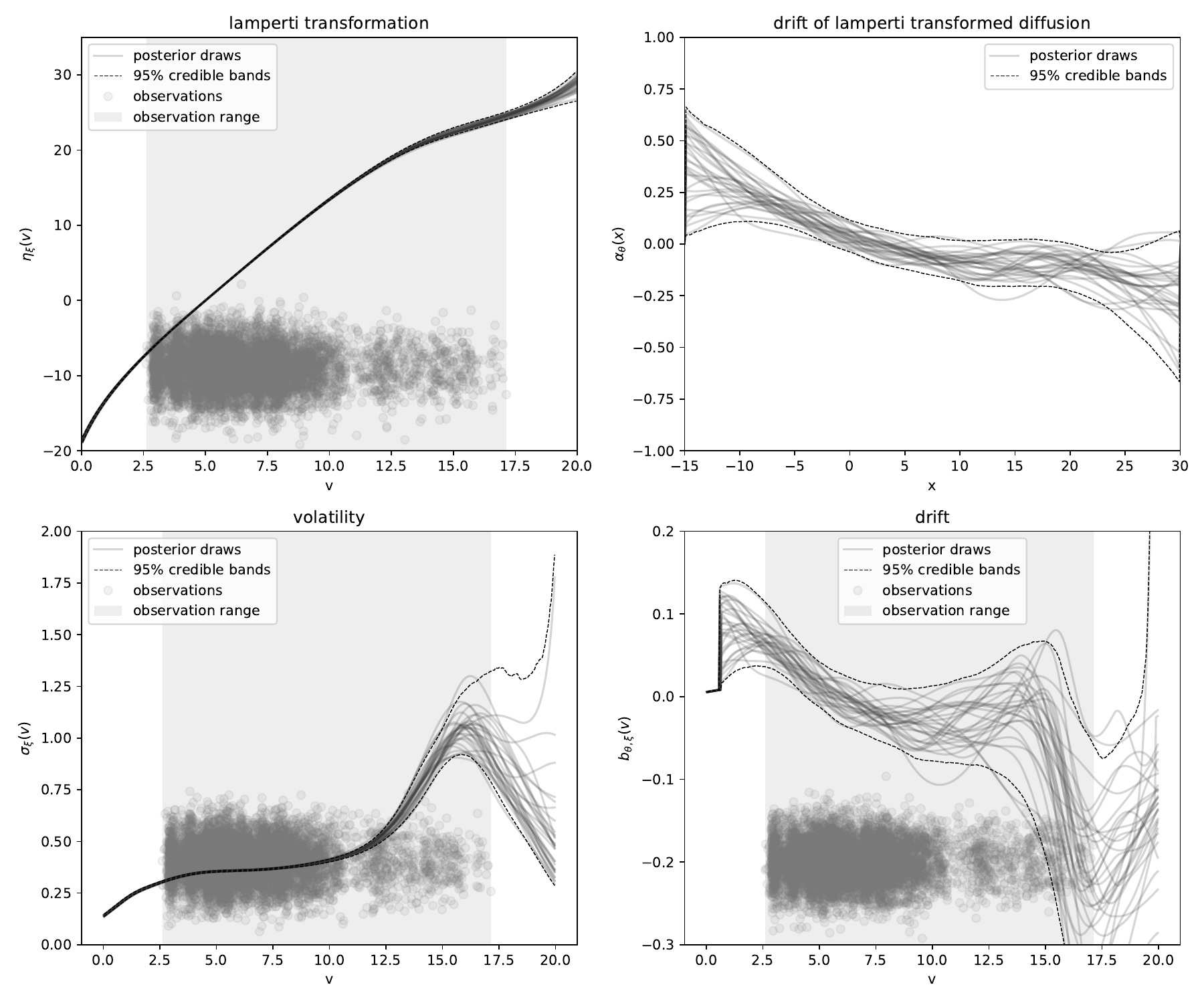}
    \captionof{figure}{Fits to the U.S. Treasury bills dataset. Algorithm \ref{alg:mcmc} was run for 200,000 iterations and 30 random draws from the last 100,000 steps were plotted. The format of the plot is the same as in Figure \ref{fig:toy_example_fits}. \label{fig:finance_fits}}
\end{figure}

We plot the 95\% empirical credible regions for the purpose of visualising the uncertainty regarding presence of non-linearities. We recover the results of \cite{stanton_nonparametric_1997} quite closely, though minor differences are present. Just as in \citet[Figure 5]{stanton_nonparametric_1997} we observe clear evidence for non-linearity of the volatility term, manifesting itself in a rapid increase towards greater values at higher levels of interest rate (Figure \ref{fig:finance_fits}, bottom left panel). This result intuitively makes sense: extraordinarily high interest rates are expected to be observed only during the most uncertain times for the financial markets. Indeed, the very highest interest rates in Figure \ref{fig:finance_example_obs} fall in the late 70s and early 80s---a time of high inflation, contractionary monetary policy, and an ensuing recession of the U.S.\ economy. Unlike \cite{stanton_nonparametric_1997} however, we note that the volatility term appears nearly flat for a range of interest rates: 4--9\%. Additionally, it is apparent that for smaller values of interest rates (2.5--8\%) the drift coefficient acts as a gentle mean reversion term (bottom right panel, compared with \citet[Figure 4]{stanton_nonparametric_1997}). It flattens out at medium interest rates (8--15\%) and then changes to a very strong mean reversion term for large values of interest rates (15\%+), preventing them from exploding to infinity. Relatively stronger mean reversion is required to counteract the increased level of volatility.

\subsection{Paleoclimatology dataset}

In this section we analyse isotopic records from ice cores drilled and studied under the North Greenland Ice Core Project \citep{andersen2004high}. The data consists of the estimates of the historical levels of $\delta^{18}O$ present in Greenland's ice cores during their formation, dating back 123,000 BP (before present) until present. $\delta^{18}O$ is a ratio of isotopes of oxygen (18 and 16) as compared to some reference level (with a known isotopic composition) and is a commonly used measure of the temperature of precipitation.

\begin{figure}[ht]
  \begin{center}
    \includegraphics[width=0.8\textwidth]{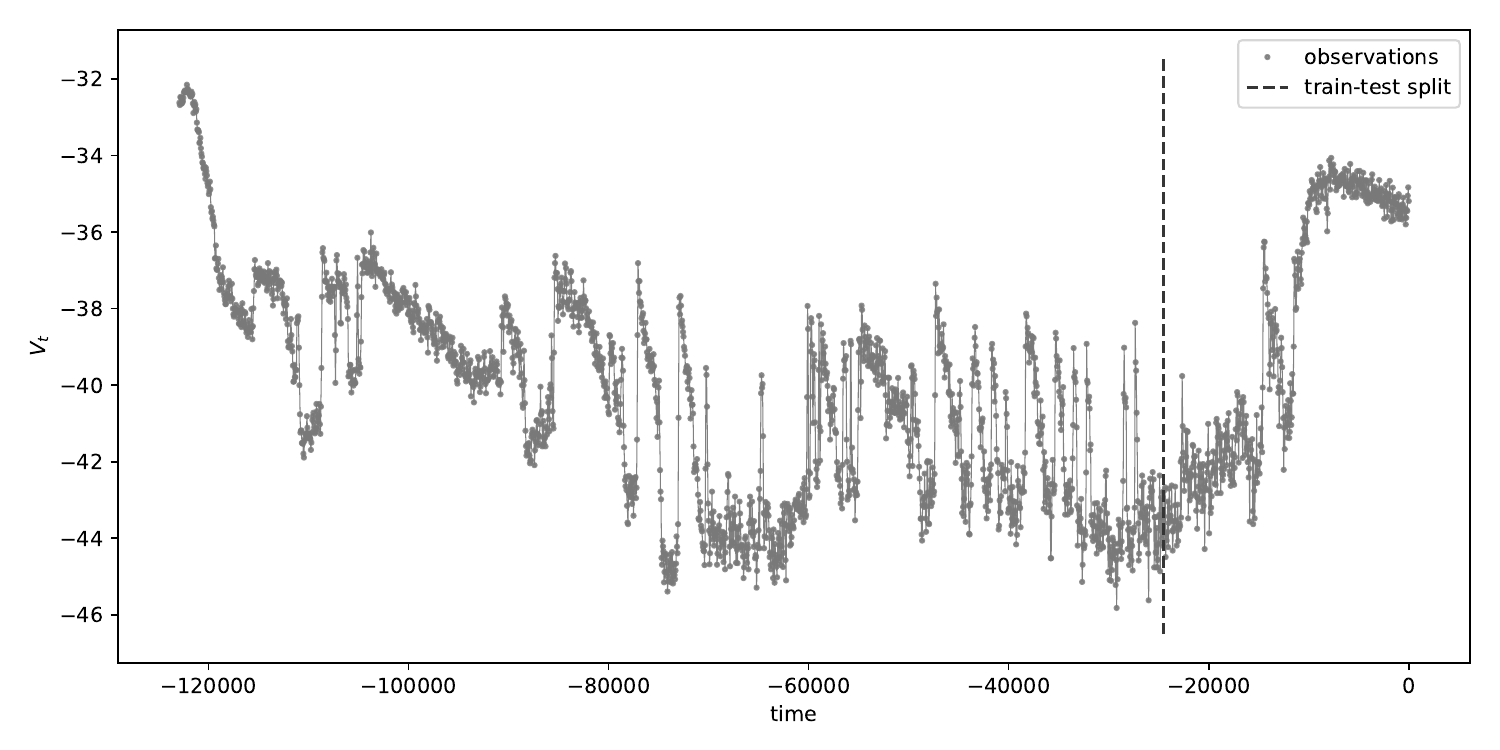}
    \caption{50 year mean values of $\delta^{18}O$ present during formation of ice cores in North Greenland. We remark that these particular data are usually displayed on a reversed time-axis; however, as our diffusion model needs to obey the laws of causality we use the usual convention for time-axis.% Data source: \protect\url{ftp://ftp.ncdc.noaa.gov/pub/data/paleo/icecore/greenland/summit/ngrip/isotopes/ngrip-d18o-50yr.txt}
 \label{fig:paleoclimatology_example_obs}}
  \end{center}
\end{figure}

The data (Figure \ref{fig:paleoclimatology_example_obs}) shows oscillations between two states (so called Dansgaard-Oeschger (DO) events): stadial (cold) and interstadial (warm), and it exhibits sharp shifts between the two. Presently employed global circulation models are unable to reconstruct this phenomenon, which puts into question some of the conclusions that could be drawn from such models \citep{ditlevsen2009stochastic}. Consequently, one of the scientific goals is to understand the mechanisms causing DO events (see \cite{ditlevsen2009stochastic} and references therein for some hypotheses put forth). SDEs are one of the tools used for this purpose \citep{alley2001stochastic, ditlevsen2007climate}. An example often employed in the literature is a stochastic resonance model, such as a double-well potential, possibly with an additional periodic component in the drift \citep{alley2001stochastic,ditlevsen2005recurrence,ditlevsen2007climate,krumscheid2015data}. The validity of such diffusion models does not yet seem to have reached a consensus. Recently, \cite{garcia_nonparametric_2017} have fitted a non-parametric diffusion model to these data, however the method used by the authors is based on Euler--Maruyama discretization and introduces difficult to quantify bias, which might be substantial. We fit our exact and flexible model with the aim of finding an appropriate family of parametric diffusions, and without a priori assuming the form of a stochastic resonance model.

Again following our methodology, the orders of bases $u$ and $h$ were set to $4$ and $3$ respectively, and knots for $\A$ and $\lamp$ were placed respectively at:
\begin{align*}
  \kappa^{({\theta})}&\leftarrow\{-15_{(5)},-12.5,-10,-7.5,-5,-2.5,0,%\\
  %&\phantom{\leftarrow\{}
   2.5,5,7.5,10,12.5,15,17.5,20,22.5,25_{(5)}\},\\
    \kappa^{({\xi})}&\leftarrow\{-49_{(4)},-47,-45,-43,-41,-39,-37,-35,-33,-31,-29_{(4)}\}.
\end{align*}
$\bar{v}$ was set to $-40$. The regularization parameters were set to $\lambda_{1,3}=5000$, $\lambda_{2,2}=1$ and Algorithm \ref{alg:mcmc} was run for 300,000 iterations. The results are given in Figure \ref{fig:paleoclimatology_fits}.

\begin{figure}[!t]
    \includegraphics[width=\textwidth]{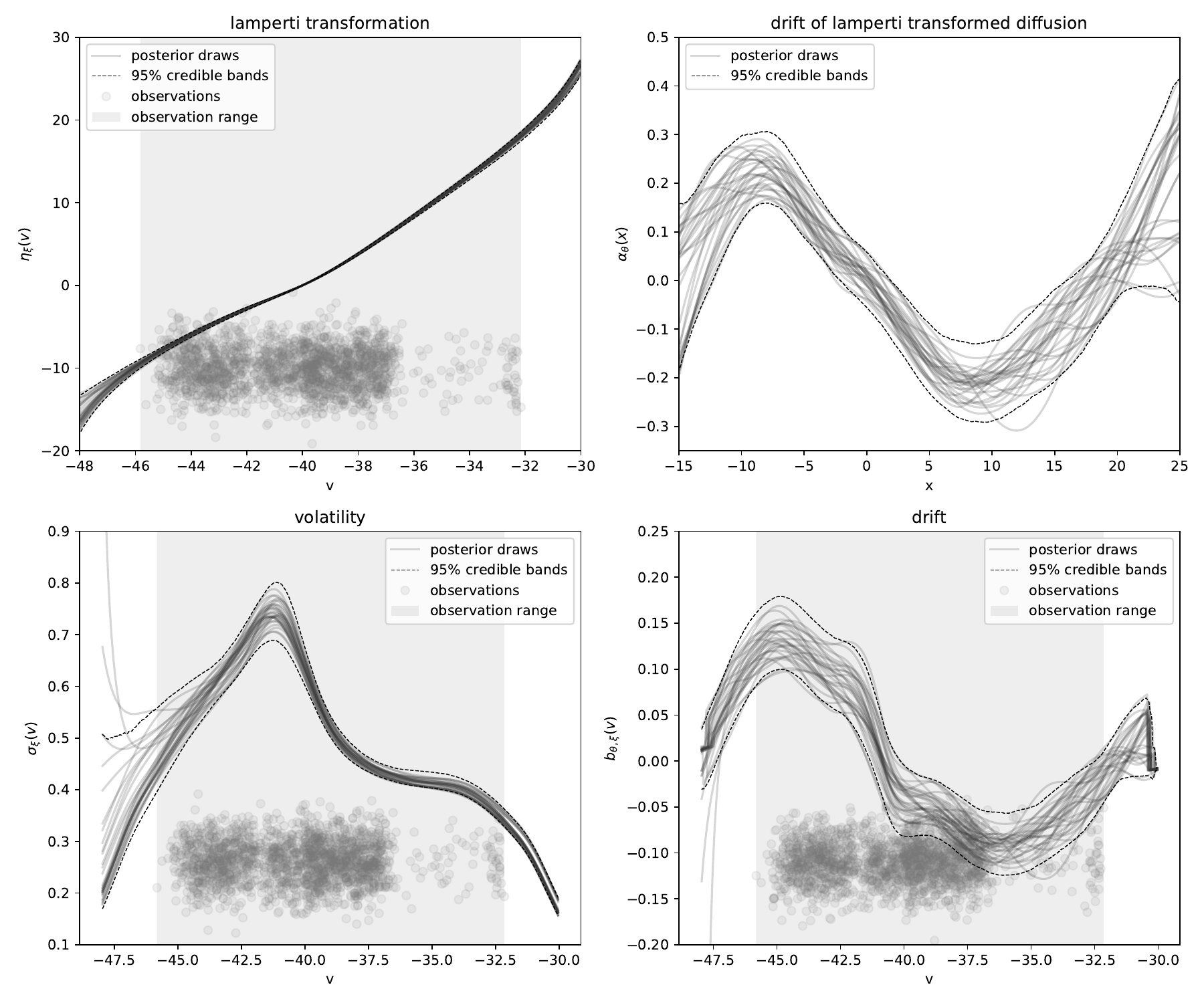}
    \captionof{figure}{Fits to the $\delta^{18}O$ dataset. Algorithm \ref{alg:mcmc} was run for 300,000 iterations and 30 random draws from the last 150,000 steps were plotted. The format of the plot is the same as in Figure \ref{fig:toy_example_fits}. \label{fig:paleoclimatology_fits}}
\end{figure}

The results are somewhat surprising. The drift parameter indeed appears to be consistent with that of a double-well potential model (this behaviour is more pronounced for the drift of a Lamperti-transformed diffusion), producing the observed separation of stadial and interstadial states. However, the volatility coefficient appears to be an equally strong non-linear contributor. It spikes around $-41$, which is the trough between the stadial and interstadial regions, allowing for more frequent transitions between two states than would have otherwise been possible under a regular double-well potential model. Our approach shows that existing stochastic resonance dynamic models are not adequate. This could perhaps be an indication that a richer class of models is needed to explain the NGRIP data, such as via more sophisticated drift and volatility coefficients or via a multi-dimensional diffusion model. 

\subsection{Astrophysics dataset}
Active galactic nuclei (AGNs) are luminous objects sitting at the centres of galaxies. Quasars comprise a subset of the brightest of AGNs. The level of luminosity emitted by those objects varies over time. However, reasons for their variability are unclear \citep{kelly2009variations}.

\begin{figure}[ht]
  \begin{center}
    \includegraphics[width=0.75\textwidth]{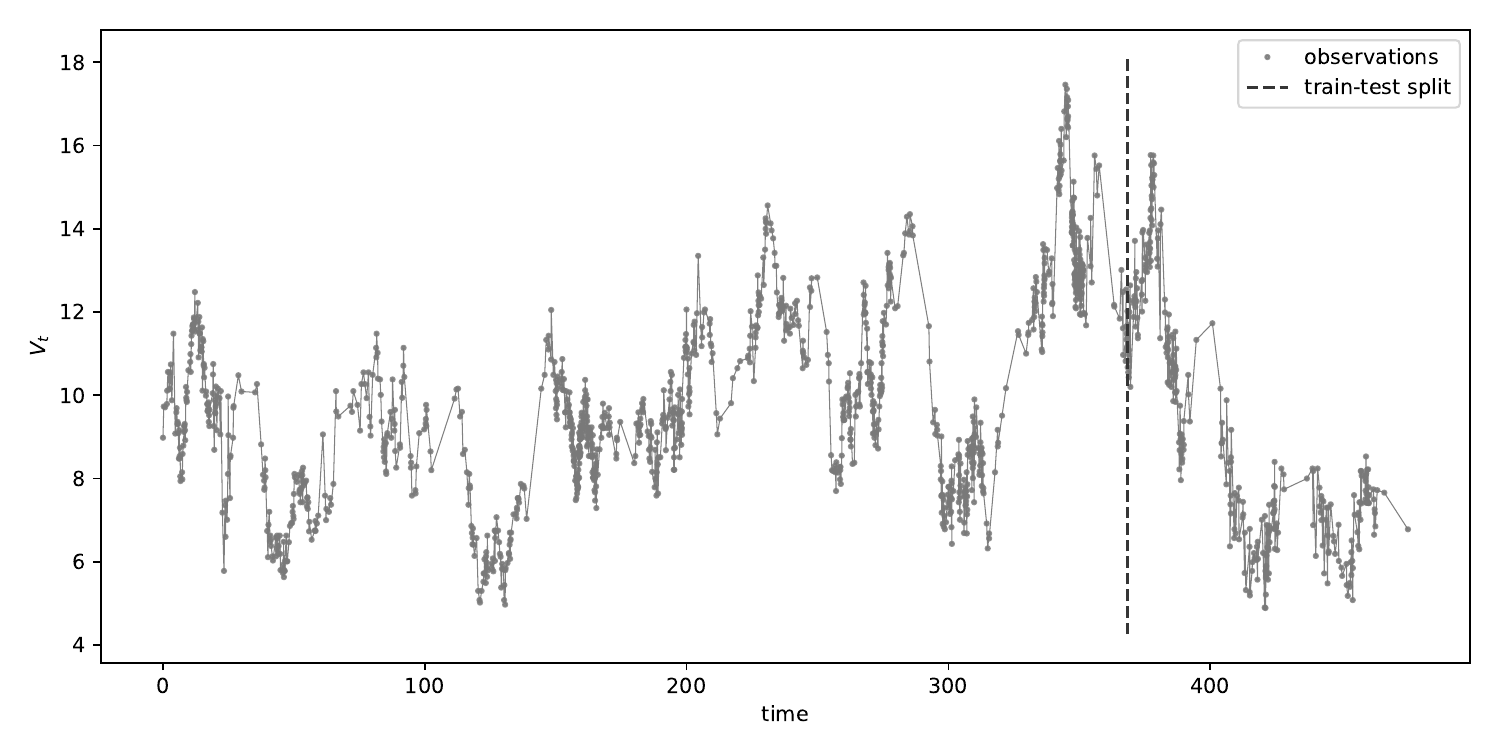}
    \caption{Observations of light curves of NGC 5548 (optical continuum at 5100\AA{} in units of $10^{-15}$ ergs s$^{-1}$ cm$^{-2}$ \AA$^{-1}$). Data taken from AGN Watch Database \citep{agn:ngc5548}. \label{fig:astrophysics_example_obs}}
  \end{center}
\end{figure}

\cite{kelly2009variations} performed a comprehensive study of the optical light curves of quasars, under the assumption that they can be described by an Ornstein--Uhlenbeck process. This choice was dictated not by an understanding of the mechanism governing the phenomenon, but instead by seeking a model exhibiting three properties: (i) a continuous-time process, (ii) consistent with the empirical evidence for spectral density being proportional to $S(f)\propto 1/f^2$, and (iii) parsimonious enough to apply to large datasets.
Additionally, as the authors note, ``much of the mathematical formalism of accretion physics is in the language of differential equations, suggesting that stochastic differential equations are a natural choice for modeling quasar light curves''. Naturally, this raises the question of whether more complex diffusion processes could provide better fits to the data. We fitted our flexible model to an observation of a single quasar NGC 5548 (Figure \ref{fig:astrophysics_example_obs}, from \citet[Fig.\ 4, left]{kelly2009variations}), taken from the AGN Watch Database \citep{agn:ngc5548}. Our aim was to investigate whether fitting a flexible model would exhibit significant deviation from the assumed OU process.

The orders of bases $u$ and $h$ were set to $4$ and $3$ respectively, and knots for $\A$ and $\lamp$ were placed respectively at:
\begin{align*}
  \kappa^{(\theta)}&\leftarrow\{-4.3_{(5)},-3.44, -2.58, -1.72, -0.86, 0, 0.86, 1.72, 2.58, 3.44, 4.3_{(5)}\},\\
    \kappa^{(\xi)}&\leftarrow\{4_{(4)},5.4, 6.8, 8.2, 9.6, 11, 12.4, 13.8, 15.2, 16.6, 18_{(4)}\}.
  \end{align*}
$\bar{v}$ was set to $-40$. The regularization parameters were set to $\lambda_{1,3}=100$, $\lambda_{2,2}=1$,  and Algorithm \ref{alg:mcmc} was run for 400,000 iterations. The results are given in Figure \ref{fig:astro_coefs}. The drift of a Lamperti-transformed diffusion indeed appears to be consistent with a simple mean-reversion term $-a(X-b)$ of the OU process; however, the volatility term exhibits strongly non-linear behaviour. In particular, in two regions of space: 7--9 and 13--15, the volatility is elevated resulting in a high irregularity of the drift of an underlying process. These results strongly suggest that an OU process is too restrictive a model for the given data.

\begin{figure}[!t]
    \includegraphics[width=\textwidth]{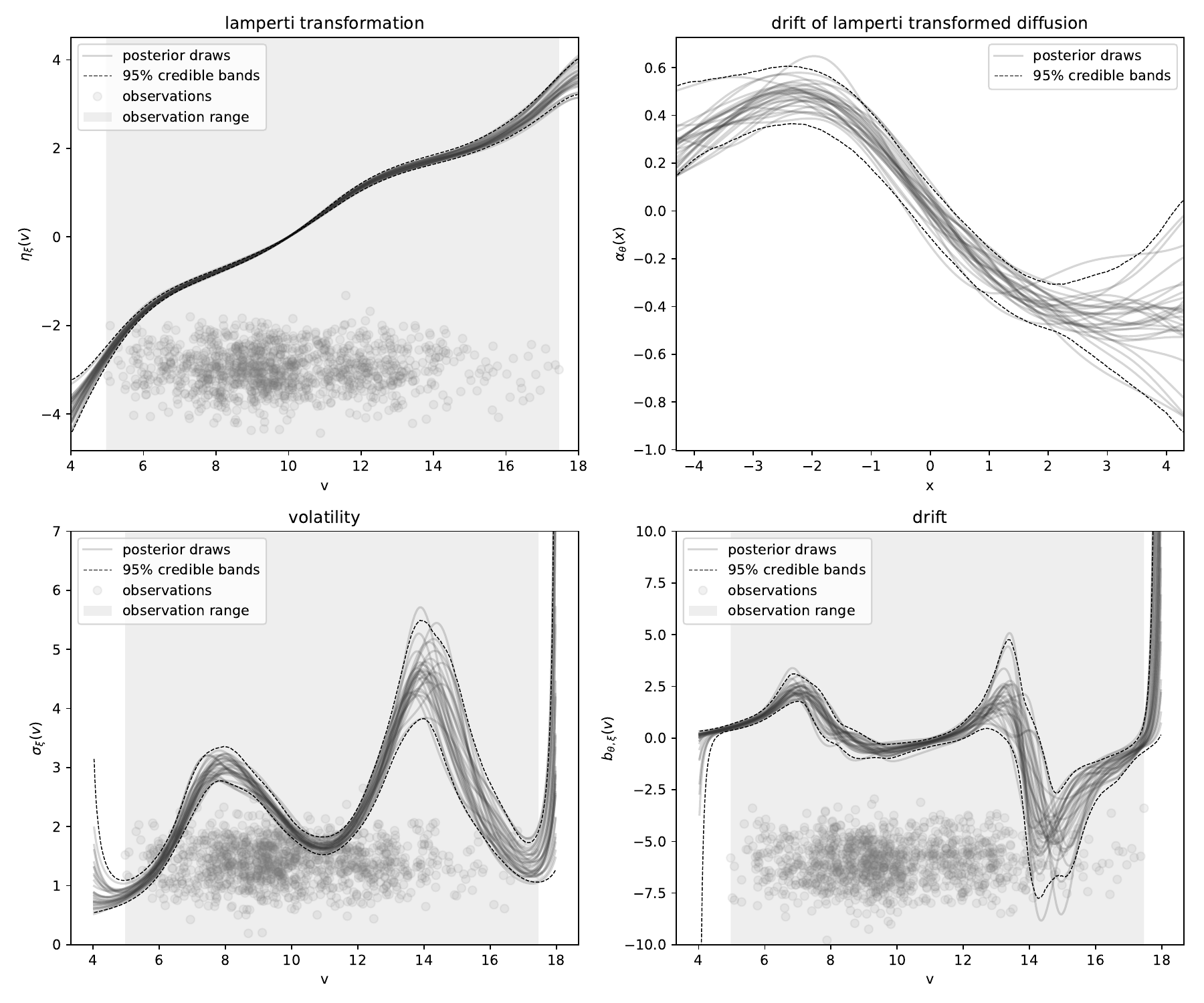}
    \captionof{figure}{Fits to the NGC 5548 dataset. Algorithm \ref{alg:mcmc} was run for 400,000 iterations and 30 random draws from the last 200,000 steps were plotted. The format of the plot is the same as in Figure \ref{fig:toy_example_fits}.\label{fig:astro_coefs}}
\end{figure}

\section{Conclusion} \label{sec:conclusions}
In this article we introduced a flexible Bayesian algorithm for simultaneous inference of both drift and diffusion coefficients of an SDE from discrete observations. The method avoids any time-discretization error and does not make any assumptions about the frequency or spacing of the observations; it is therefore naturally suited to handle time series data with missing observations. Key to our approach is to model indirectly using a spline basis suitable transformations of the drift and diffusion coefficients. We developed, with consideration of practical issues, an MCMC algorithm for sampling from the posterior of the basis parameters, given discrete observations from an SDE. We hope that visualization of potential functional forms for $b$ and $\sigma$ will be a powerful investigative tool for practitioners, allowing them to refine their understanding of the processes they are studying. 

The range of real-world examples considered in the numerical section of this paper demonstrate the breadth of applicability of our methodology. In the illustrative example of Section \ref{sec:illustrative}, we showed that even in the presence of severe over-specification of the number of knots the recovery of true functions is possible. Analysis of the financial dataset resulted in conclusions largely in agreement with what has already been observed in the literature through the use of other, frequentist, non-parametric methods. Our methodology suggests that for the paleoclimatology and astrophysics examples a richer class of statistical models to those currently used by practitioners seems to be needed. This was aided by our ability to visualise plausible posterior functional forms of the drift and diffusion coefficients.

A substantial direction to extend our methodology would be to broaden its applicability from scalar to multi-dimensional processes. Grounds for optimism are that for a $d$-dimensional diffusion of gradient type (i.e.\ for which a potential $\A:\mathbbm{R}^d\rightarrow \mathbbm{R}$ exists), the potential can be modelled through multivariate interpolations of B-splines. However, a potential complication is the modelling of the Lamperti transformation, which must satisfy additional conditions analogous to the monotonicity property required in dimension one \citep[see][]{ait2008closed}. 

Another natural direction would be to develop a companion methodology for noisy observations from an SDE, as opposed to observations without noise \citep{beskos2006exact,beskos2009monte}, a very common framework in many applications.

\section*{Acknowledgments}

The authors would like to thank Dr Marcin Mider for substantial contributions to the development of this work.

\subsection*{Declarations}
\begin{enumerate}
\item All three authors were supported by the Alan Turing Institute under the EPSRC grant EP/N510129/1.  Gareth Roberts was additionally supported under the EPSRC grants EP/K034154/1, EP/K014463/1, EP/R034710/1 and EP/R018561/1.
    \item
The authors: Paul Jenkins, Murray Pollock and Gareth Roberts have no conflicts of interest or competing interests to disclose.
\item
The US Treasury Bill data can be accessed at {\tt https://home.treasury.gov }. Our Paleoclimatology example uses data from \cite{andersen2004high}, and  data for the Astophysics example is taken from AGN Watch Database \citep{agn:ngc5548}.
\item The authors: Paul Jenkins, Murray Pollock and Gareth Roberts, contributed equally to all aspects of this article. 
\end{enumerate}

\appendix
\section{Details of Algorithm 1} \label{apx:supplement}
\subsection{Updating the latent diffusion path}
Here we give further details of Step \ref{algStep:psrs} of Algorithm \ref{alg:mcmc} when we have discrete observations from a diffusion of the form \eqref{eq:master_SDE}. See Section \ref{sec:inference_algo} for some intuition. As noted in that Section, following an application of the Lamperti transformation the datapoints $\{\lamp(v_{t_i}):\: i=1,\dots,N\}$ now depend on a parameter of interest, reintroducing the very problem of degeneracy that the Lamperti transformation was designed to avoid. To resolve this, the general idea is to decouple the dependence of the source of randomness from the parameter $\xi$ when constructing the variable $\mathcal{S}(\mathcal{A})$. To this end, the \emph{innovation process} is defined:
\begin{equation*}
    Z:=\{Z^{[i]}\}_{i=0}^{N-1}\in \bigotimes_{i=0}^{N-1}\mathcal{C}(\mathbbm{R};[0,\Delta_i]), \mbox{ where } \Delta_i:=t_{i+1}-t_i,
\end{equation*}
with $\bigotimes$ denoting a product space and $\mathcal{C}(B;A)$ a space of continuous functions from $A\rightarrow B$ (notice that $Z$ is independent of $\xi$), together with a function:
\begin{equation*}
    \Psi(\cdot;\xi):\bigotimes_{i=0}^{N-1}\mathcal{C}(\mathbbm{R};[0,\Delta_i])\rightarrow\mathcal{C}(\mathbbm{R};[0,T]). 
\end{equation*}
We construct $\Psi$ and $Z$ so that $V^{\circ}:=\Psi(Z;\xi)$ serves as a proposal from $\pi(\mathcal{A}|\theta,\xi,\mathcal{D})$ within a rejection sampler. In practice this can be achieved by instead using a finite-dimensional summary of $V^{\circ}$ as a proposal from $\pi(\mathcal{S}(\mathcal{A})|\theta,\xi,\mathcal{D})$.

In order to construct the pair $(Z,\Psi(\cdot;\xi))$, a diffusion $V$ solving \eqref{eq:master_SDE} is first transformed to a diffusion $X:=\{\lamp(V_t), t\in[0,T]\}$ via the Lamperti transformation \eqref{eq:lamperti}.
The set of transformed observations is then defined as $\widetilde{\mathcal{D}}_{\xi}:=\{\obs_i:\: i=0,\dots,N\}:=\{\lamp(v_{t_i}):\: i=0,\dots,N\}$. Finally, the \emph{centering} functions are defined as:
\begin{equation}\label{eq:centering_fn}
    \m_{i}(t):= \obs_{i} + \frac{t}{\Delta_i}(\obs_{i+1}-\obs_{i}),\quad t\in[0,\Delta_i],\quad i=0,\dots N-1,
\end{equation}
and a map $\varsigma(\cdot;\xi):\bigotimes_{i=0}^{N-1}\mathcal{C}(\mathbbm{R};[0,\Delta_i])\rightarrow\mathcal{C}(\mathbbm{R};[0,T])$ as:
\begin{equation*}
    \varsigma(Z;\xi)_t:=\sum_{i=0}^{N-1}\left(Z_{t-t_i}^{[i]} + \m_{i}(t-t_i)\right)\mathbbm{1}_{(t_i,t_{i+1}]}(t),\quad t\in(0,T],
\end{equation*}
(with $\varsigma(Z;\xi)_0:=V_0$). \citet{roberts2001inference} define the innovation process as draws from the measure $\bigotimes_{i=0}^{N-1}\mathbbm{W}^{(\Delta_i,0,0)}$, and $\Psi(\cdot;\xi)$ to be given by $\Psi(Z;\xi):=\lamp^{-1}\circ \varsigma(Z;\xi)$, where we recall that $\mathbbm{W}^{(t,x,y)}$ denotes the law induced by a Brownian bridge from $x$ to $y$ over the interval $[0,t]$.

To impute the finite-dimensional surrogate variable $\mathcal{S}(\mathcal{A})$ it is now enough to employ $N$ independent path-space rejection samplers, each for a separate interval $(t_i,t_{i+1})$, $i=0,\dots,N-1$, and whenever a proposal path $V^{\circ}$ needs to be revealed at a time-point $t\in(t_i,t_{i+1})$, an innovation process $Z^{[i]}$ is sampled at time $t-t_i$ in order to obtain $V^{\circ}_t=\Psi(Z;\xi)_t$. Since path-space rejection sampling reveals proposals only at a discrete collection of (random) time-points, the simulated innovation process is given by the following proposal surrogate random variable $\mathcal{S}^{\circ}(\mathcal{A})$:
  \begin{equation}\label{eq:surrogate}
    \mathcal{S}^{\circ}(\mathcal{A}):=\{\{Z^{[i]}_{\psi_{i,j}},\{\psi_{i,j}, \chi_{i,j}\}\}_{j=1}^{\varkappa_i}, \varkappa_i,\Upsilon_i\}_{i=0}^{N-1}.
  \end{equation}
Here $\{\psi_{i,j}, \chi_{i,j}\}_{j=1}^{\varkappa_i}$ is a Poisson Point Process on $[0,\Delta_i]\times[0,1]$ with intensity $\upBd(\Upsilon_i)$ (where $\upBd(\Upsilon_i)$ is a local upper bound on $\G-\lowBd$, with $\G$ and $\lowBd$ as defined in Section 2). $\Upsilon_i$ is an additional random element containing information about the path $Z$ enabling the upper bound $\upBd$ to be computed (for instance an interval which constrains a Brownian bridge path, $Z^{[i]}$, to a given interval; see \cite{beskos2008factorisation} for more details). This proposal is then accepted with probability proportional to the Radon--Nikodym derivative between the proposal and the target laws, as described in Section \ref{sec:inference_algo}.

\subsection{Updating the parameters}
With the surrogate defined as in \eqref{eq:surrogate}, \citet{sermaidis_markov:2013} derive a closed form expression for the joint density of the imputed data $\mathcal{S}(\mathcal{A})$, parameters $(\theta,\xi)$ and observations $\mathcal{D}$, which becomes:
\begin{equation}
  \label{eq:joint_param_sampler}
  \begin{split}
    \jointDist{}{}&= \pi(\theta,\xi)\exp\bigg\{ \A(\obs_T) - \A(\obs_0) - (\lowBd - 1)T - \sum_{i = 0}^{N-1}\upBd(\Upsilon_i,\m) \bigg\}\\
    &\phantom{=}\quad\cdot\prod_{i=0}^{N-1}\bigg\{\mathcal{N}_{\Delta_i}(\obs_{i+1}-\obs_{i})D_{\xi}(v_{t_{i+1}})\upBd(\Upsilon_i,\m_i)^{\varkappa_i}\\
    &\phantom{=}\qquad\qquad\cdot\prod_{j=1}^{\varkappa_i}\bigg[1-\big(\G(Z^{[i]}_{\psi_{i,j}}+\m_{i}({\psi_{i,j}}))-\lowBd\big)/\upBd(\Upsilon_i,\m_i)\bigg]\bigg\}.
  \end{split}
\end{equation}
Here we denote by $\mathcal{N}_t(x)$ as a Gaussian density with variance $t$ and mean $0$ evaluated at $x$, and $D_{\xi}(\cdot):=(\vola(\cdot))^{-1}=\lamp'(\cdot)$. Additionally, recall from \eqref{eq:lamperti} the notation for the anti-derivative of the drift of $X$.

The density \eqref{eq:joint_param_sampler} can then be used to sample from $\pi(\xi,\theta|\mathcal{S}(\mathcal{A}),\mathcal{D})$ in the parameter-update step. In particular, to sample from $\pi(\theta,\xi|\mathcal{S}(\mathcal{A})^{(n-1)},\mathcal{D})$ at the $n$th iteration of the Markov chain we can draw $(\theta^{\circ},\xi^{\circ})\sim q((\theta^{(n-1)},\xi^{(n-1)}),\cdot)$ from some proposal kernel (we did this as per \eqref{eq:qstep1} and \eqref{eq:qstep2}, but could be another proposal such as a random walk) and then employ a Metropolis--Hastings correction with the acceptance probability given by Algorithm \ref{alg:mcmc} Step \ref{algStep:mhprob}.

\bibliographystyle{agsm}

\bibliography{Bibliography-MM-MC}
\end{document}